\newtheorem{lemma}{Lemma}
\newtheorem{theorem}{Theorem}
\def\<{\leqslant}           % nice less than or equal to sign
\def\>{\geqslant}           % nice larger than or equal to sign
\def\d{\partial}
\def\wh{\widehat}
\def\Re{{\rm Re\,}}   % real part
\def\cH{{\cal H}}   % Hardy space
\def\mR{{\mathbb R}}    % real line
\def\mC{{\mathbb C}}    % complex plane
\def\Tr{{\rm Tr}}       % matrix trace
\def\rT{{\rm T}}        % matrix transpose
\def\bE{{\mathbf E}}    % expectation
\def\bK{{\mathbf K}}    % cumulant
\def\bra{{\langle}}
\def\ket{{\rangle}}
\def\re{{\rm e}}        % number e
\def\rd{{\rm d}}        % differential
\def\fL{{\mathfrak L}}
\def\bQ{{\mathbf Q}}
\def\bGamma{{\mathbf \Gamma}}
\def\bPi{{\mathbf \Pi}}
\def\x{\times}
\def\cF{{\cal F}}
\def\cC{{\cal C}}
\def\cA{{\cal A}}
\def\cB{{\cal B}}
\def\cE{{\mathcal E}}
\def\cov{{\bf cov}}
\def\var{{\bf var}}
\def\bL{{\mathbf L}}
\def\mT{{\mathbb T}}
\def\Ups{\Upsilon}
\def\phi{\varphi}
\begin{document}
%%%%%%%%%%%%%%%%%%%%%%%%%%%%%%%%%%%%%%%%%%%%%%%%%%%%%%%%%%%%%%%%%%%%%%%%%%%%%%%
%%%%%%%%%%%%%%%%%%%%%%%%%%%%%%%%%%%%%%%%%%%%%%%%%%%%%%%%%%%%%%%%%%%%%%%%%%%%%%%
\title{\Large\bf
Hardy-Schatten Norms of Systems, Output Energy
Cumulants and Linear Quadro-Quartic  Gaussian
    Control}
%==============================================================================
\author{
Igor G. Vladimirov,
\qquad Ian R. Petersen
\thanks{This work is supported by the Australian Research Council.
The authors are with the School of Engineering and Information
Technology, University of New South Wales at
    the  Australian  Defence Force Academy, Canberra ACT 2600,
    Australia. E-mail: 
        {\tt igor.g.vladimirov@gmail.com,
        i.r.petersen@gmail.com}.}
}
%==============================================================================
\onecolumn
\pagestyle{plain}

\maketitle
\thispagestyle{empty}
%\pagestyle{empty}

%==============================================================================
\begin{abstract}
This paper is concerned with linear stochastic control systems in state space. The integral of the squared norm of the system output over a bounded time interval is interpreted as energy. The cumulants of the output energy in the infinite-horizon limit are related to Schatten norms of the system in the Hardy space of transfer functions and the risk-sensitive performance index. We employ a novel performance criterion which seeks to minimize a combination of the average value and  the variance of the output energy of the system per unit time. The resulting  linear quadro-quartic Gaussian control problem involves the $\cH_2$ and $\cH_4$-norms of the closed-loop system. We obtain equations for the optimal controller and outline a homotopy method which reduces the solution of the problem to the numerical integration of  a differential equation initialized by the standard linear quadratic Gaussian  controller.
\end{abstract}
\section{Introduction}
%%%%%%%%%%%%%%%%%%%%%%%%%%%%%%%%%%%%%%%%%%%%%%%%%%%%%%%%%%%%%%%%%%%%%%%%%%%%%%%

This paper is concerned with linear multi-input multi-output control systems, governed in state space by Ito stochastic differential equations, driven by a standard Wiener process which is regarded  as a random disturbance. The integral of the squared Euclidean norm of the system output over a bounded time interval  is interpreted as \textit{energy}. In the disturbance attenuation paradigm, the output energy is to be minimized  in some sense.

Linear Quadratic Gaussian (LQG) control \cite{AM_1990}, for example, seeks to  minimize the expectation of the output energy which, in the infinite-horizon limit, reduces to the squared $\cH_2$-norm of the closed-loop system in an appropriate Hardy space of transfer functions. An alternative performance index is employed in the Risk-Sensitive  and Minimum Entropy control theories \cite{MG_1991}. They utilise the expected value of the exponential of the output energy multiplied by a scaling parameter to adjust the risk sensitivity.  Risk-sensitive control  extends the LQG approach and is robust with respect to  Kullback-Leibler relative entropy bounded uncertainties in the random noise \cite{DJP_2000}.

The risk-sensitive performance index can be represented as a series expansion with respect to the energy scaling parameter. The coefficients of this series are the rates of the asymptotically linear growth of the cumulants of the output energy in the infinite-horizon limit. The cumulant growth rates are directly related to higher-order Schatten norms \cite{Simon_2005} of the transfer function of the system in an appropriate Hardy space. This allows the risk-sensitive criterion to be  viewed  as a linear combination of powers of Hardy-Schatten norms of the system whose weights are governed by the risk-sensitivity parameter in a very specific way. The ``reverse engineering''  of the risk-sensitive index suggests a wide family of performance criteria in the form of linear combinations of powers of the Hardy-Schatten norms. This gives rise to a class of \textit{output energy cumulant} (OEC) control problems which extend the risk-sensitive paradigm. In fact, the LQG approach can be considered to explore this freedom to a certain degree by retaining the first term (the squared $\cH_2$-norm of the system)  of the risk-sensitive index expansion.

The present paper develops the OEC control idea,  outlined above, by employing a performance criterion which seeks to minimize a combination of the average value and  the variance of the output energy of the system per unit time. The resulting linear quadro-quartic Gaussian (LQQG) control problem utilizes a \textit{quadro-quartic functional} as a finer truncation of the risk-sensitive performance index which retains the $\cH_2$  and $\cH_4$-norms of the closed-loop system and the risk-sensitive parameter.

The $\cH_4$-norm, which involves the Schatten 4-norm of matrices \cite{HJ_2007} and is referred to as the \textit{quartic norm}, was introduced in \cite{VKS_1996}  as a subsidiary construct in the anisotropy-based robust control theory for discrete-time stochastic systems. In the present study, the quartic norm plays a central role and, in addition to providing the next  term in the risk-sensitive index expansion, quantifies (via the $\cH_4$ to $\cH_2$-norms ratio) the time scale beyond which the infinite-horizon LQG cost starts manifesting itself in sample paths of the output energy of the system.  %Minimizing the quartic norm at a given level of the $\cH_2$-norm of the system would therefore shorten the time scales needed for the infinite-horizon LQG cost to be a relevant measure of the disturbance attenuation capabilities of the system.

We consider the LQQG problem in the class of linear stabilizing controllers with the same state dimension as the underlying plant. This allows equations for an optimal controller to be obtained by using  Frechet derivatives of the quadro-quartic performance index of the closed-loop system with respect to the state-space realization matrices of the controller. The resulting set of equations depends on the risk sensitivity parameter and yields the standard LQG controller for a zero value of the parameter.
We outline a homotopy method which regards the parameter as a fictitious time variable and reduces the solution of the set of equations to a problem involving  the numerical integration of  an ordinary  differential equation (ODE) initialized by the standard LQG controller.

In addition to its possible extension to the
discrete-time case, the LQQG approach may also find application in the control of quantum
stochastic  systems as an alternative to the risk-sensitive control
paradigm.

\section{Variance of output energy and quartic norm \label{sec:H4}}
%%%%%%%%%%%%%%%%%%%%%%%%%%%%%%%%%%%%%%%%%%%%%%%%%%%%%%%%%%%%%%%%%%%%%%%%%%%%%%%

Suppose $W := (w_t)_{t\in \mR}$ is a $m$-dimensional standard Wiener
process (initialised in the infinitely distant past) at
the input of a linear time invariant (LTI) system $F$
with a square integrable $\mR^{p\x m}$-valued impulse response function $f:= (f_t)_{t\> 0}$; see Fig.~\ref{fig:zfw}.
%==============================================================================
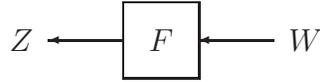
\begin{figure}[htb]
\begin{center}
\unitlength=1.0mm
%\linethickness{0.6pt}
\begin{picture}(50.00,10.00)
    \put(20,0){\framebox(10,10)[cc]{$F$}}
    \put(40,5){\vector(-1,0){10}}
    \put(20,5){\vector(-1,0){10}}
    \put(42,5){\makebox(0,0)[lc]{$W$}}
    \put(8,5){\makebox(0,0)[rc]{$Z$}}
\end{picture}
\caption{
    An LTI system $F$ with input $W$ and
    output $Z$.}
\label{fig:zfw}
\end{center}%\vskip-5mm
\end{figure}
%==============================================================================
The output $Z := (z_t)_{t\in \mR}$ of the system is a $\mR^p$-valued Gaussian random process  defined by the Ito stochastic integral
$    z_t
    :=
    \int_{-\infty}^{t}
    f_{t-s} \rd w_s
$.
 The mean
value of $Z$ is zero and the  covariance function is
\begin{equation}
\label{cov}
    c_t
    :=
    \bE(z_t z_0^{\rT})
    =
    \int_{0}^{+\infty}
    f_{s+t}f_s^{\rT}
    \rd
    s
    =
    \frac{1}{2\pi}
    \int_{-\infty}^{+\infty}
    S(\omega)
    \re^{i\omega t}
    \rd \omega
    =
    c_{-t}^{\rT},
    \qquad
    t\> 0,
\end{equation}
where
\begin{equation}
\label{S}
    S(\omega)
    :=
    \wh{F}(\omega) \wh{F}(\omega)^*
    =
    \int_{-\infty}^{+\infty}
    c_t
    \re^{-i \omega t}\
    \rd t
\end{equation}
is the spectral density of $Z$. Here, $(\cdot)^* :=
(\overline{(\cdot)})^{\rT}$ denotes the complex conjugate transpose of a
matrix, and
$    \wh{F}(\omega)
    :=
    F(i\omega)
    =
    \int_{0}^{+\infty}
    f_t
    \re^{-i \omega t}
    \rd t
$
is the Fourier transform of the impulse response, that is, the
boundary value of the transfer function of the system
$    F(v)
    :=
    \int_{0}^{+\infty}
    f_t
    \re^{-vt}
    \rd t
$, with $
    \Re v >0
$.
With $f$ assumed to be square integrable, $F$ belongs to the Hardy space
$\cH_2^{p\x m}$ of $\mC^{p\x m}$-valued functions of a complex
variable, analytic in the right half-plane and endowed with the
$\cH_2$-norm
\begin{equation}
\label{H2}
    \|F\|_2
    :=
    \sqrt{
        \int_{0}^{+\infty}
        \|f_t\|^2
        \rd t
    }
%    =
%    \sqrt{
%        \Tr c_0
%    }
    =
    \sqrt{
        \frac{1}{2\pi}
        \int_{-\infty}^{+\infty}
        \|\wh{F}(\omega)\|^2
        \rd
    \omega}.
\end{equation}
Here, the Plancherel theorem is used, and  $\|M\|:= \sqrt{\bra M,M\ket}$ denotes the Frobenius norm of a matrix $M$ generated by the inner product $\bra M, N\ket:= \Tr (M^*N)$, so that
$\|\wh{F}(\omega)\|^2 = \Tr S(\omega)$ is the trace of the spectral
density from (\ref{S}). In view of (\ref{cov}), $\|F\|_2^2 = \Tr c_0 =
\bE(|z_t|^2)$ is the variance of the output signal for any $t$.
For a finite
time horizon $T>0$, the random variable
\begin{equation}
\label{ET}
    \cE_T
    :=
    \int_{0}^{T}
    |z_t|^2
    \rd t
\end{equation}
is interpreted as the {\it output energy} of the system $F$ over the
time interval $[0,T]$, and
\begin{equation}
\label{eT}
    \epsilon_T
    :=
    \cE_T/T
\end{equation}
is the corresponding output energy rate. The
mean value of $\epsilon_T$ coincides with the squared $\cH_2$-norm of the system (\ref{H2}):
$
    \bE \epsilon_T
    =
    \|F\|_2^2
$.
This ensemble average can manifest itself
in sample paths of $\epsilon_T$ only by virtue of the law
of large numbers, provided $T$ is large
enough. Under additional assumptions on the system
$F$, the rate of the mean square convergence
$    \mathop{\rm l.i.m.}_{T\to +\infty}
    \epsilon_T
    =
    \| F\|_2^2
$ is quantified by the asymptotic
behaviour of the variance of $\epsilon_T$. The convergence rate is described by the lemma below in terms of the quantity
\begin{equation}
\label{H4}
    \|F\|_4
     :=
    \sqrt[4]{
    \frac{1}{2\pi}
    \int_{-\infty}^{+\infty}
    \|S(\omega)\|^2
    \rd \omega}
%    =
%    \sqrt[4]{
%    \int_{-\infty}^{\infty}
%    \|c_t\|^2
%    \rd t
%    }
    =
    \sqrt[4]{
    2
    \int_{0}^{+\infty}
    \|c_t\|^2
    \rd t
    }.
\end{equation}
This is a
continuous-time counterpart of the $\cH_4$-norm introduced
 as a subsidiary construct in the anisotropy-based
robust control of discrete-time systems \cite{VKS_1996}. The second equality in (\ref{H4}) follows from the Plancherel theorem
applied to the spectral density (\ref{S}). The systems $F$ with
$\|F\|_4<+\infty$ form a normed space $\cH_4^{p\x m}$. The
integrand $\|S(\omega)\|^2 = \Tr ((\wh{F}(\omega) \wh{F}(\omega)^*)^2)$ in (\ref{H4}) is the
fourth power of the Schatten 4-norm \cite[p.~441]{HJ_2007} of the
matrix $\wh{F}(\omega)$; see also \cite{Simon_2005}. The
$\cH_4$-norm $\|F\|_4$ will be referred to as the \textit{quartic norm} of the system $F$.

%Note that (\ref{H4}) is slightly different from the straightforward
%extension of the conventional norm in the Hardy space $\cH_4$ of
%scalar-valued functions to the matrix case
%$$
%    \sqrt[4]{
%    \frac{1}{2\pi}
%    \int_{-\pi}^{\pi}
%    \|\wh{F}(\omega)\|^4
%    \rd \omega}.
%$$
%Although both definitions coincide in the case, where at least one
%of the dimensions $p$ or $m$ equals 1, and yield topologically
%equivalent norms in the general MIMO setting by the inequalities
%$$
%    \frac{(\Tr \Sigma)^2}{r}
%    \<
%    \|\Sigma\|^2
%    \<
%    (\Tr \Sigma)^2
%$$
%for any positive semi-definite Hermitian matrix $\Sigma$ of rank
%$r$, the definition (\ref{H4}) is preferable due to the
%probabilistic interpretation below.

%====================================================================
\begin{lemma}
\label{lem:H4} Let $F \in \cH_2^{p\x m}\bigcap \cH_4^{p\x m}$. Then the variance of the
output energy rate (\ref{eT}) of the system behaves asymptotically
as%\vskip-5mm
\begin{equation}
\label{vareTasy}
    \var(\epsilon_T)
    \sim
    2 \|F\|_4^4/T,
    \qquad
    T\to +\infty.
\end{equation}
\end{lemma}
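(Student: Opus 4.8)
The plan is to compute $\var(\cE_T)$ directly from its definition and then divide by $T^2$ to obtain $\var(\epsilon_T)$. Since $Z$ is a zero-mean Gaussian process, the random variable $\cE_T = \int_0^T |z_t|^2 \rd t$ is a quadratic functional of a Gaussian process, and its variance is governed by the covariance function $c_t$ from (\ref{cov}). First I would write
\begin{equation*}
    \var(\cE_T)
    =
    \int_0^T \int_0^T
    \big(\bE(|z_s|^2 |z_t|^2) - \bE(|z_s|^2)\,\bE(|z_t|^2)\big)
    \rd s\, \rd t,
\end{equation*}
and then apply the Gaussian factorisation (Isserlis/Wick) identity. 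Writing $|z_s|^2 = \sum_j z_{s,j}^2$ and using $\bE(\xi_1\xi_2\xi_3\xi_4) = \bE(\xi_1\xi_2)\bE(\xi_3\xi_4) + \bE(\xi_1\xi_3)\bE(\xi_2\xi_4) + \bE(\xi_1\xi_4)\bE(\xi_2\xi_3)$ for jointly Gaussian zero-mean $\xi_k$, the "disconnected" term cancels the subtracted product, and the two remaining cross terms each contribute $\Tr(\bE(z_s z_t^{\rT})\,\bE(z_t z_s^{\rT})) = \Tr(c_{s-t} c_{s-t}^{\rT}) = \|c_{s-t}\|^2$. Hence
\begin{equation*}
    \var(\cE_T)
    =
    2 \int_0^T \int_0^T \|c_{s-t}\|^2 \rd s\, \rd t .
\end{equation*}

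Next I would reduce the double integral to a single integral by the standard change of variables $u = s-t$, exploiting that the integrand depends only on $s-t$. This gives
\begin{equation*}
    \var(\cE_T)
    =
    2 \int_{-T}^{T} (T - |u|)\, \|c_u\|^2 \rd u
    =
    2 T \int_{-T}^{T} \|c_u\|^2 \rd u
    -
    2 \int_{-T}^{T} |u|\, \|c_u\|^2 \rd u .
\end{equation*}
Dividing by $T^2$ yields
\begin{equation*}
    \var(\epsilon_T)
    =
    \frac{2}{T} \int_{-T}^{T} \|c_u\|^2 \rd u
    -
    \frac{2}{T^2} \int_{-T}^{T} |u|\, \|c_u\|^2 \rd u .
\end{equation*}
Since $F \in \cH_4^{p\x m}$, the relation (\ref{H4}) (together with $c_{-u} = c_u^{\rT}$, so $\|c_{-u}\| = \|c_u\|$) guarantees $\int_{-\infty}^{+\infty} \|c_u\|^2 \rd u = 2\int_0^{+\infty}\|c_u\|^2\rd u = \|F\|_4^4 < +\infty$. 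Therefore the first term obeys $\frac{2}{T}\int_{-T}^{T}\|c_u\|^2\rd u \to 0$ after multiplication by $T$, i.e. $T \cdot (\text{first term}) \to 2\|F\|_4^4$, which is exactly the claimed asymptotics $\var(\epsilon_T) \sim 2\|F\|_4^4/T$ provided the second term is $o(1/T)$.

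The main obstacle, then, is showing the remainder term $\frac{2}{T^2}\int_{-T}^{T} |u|\,\|c_u\|^2 \rd u$ is $o(1/T)$, equivalently that $\frac{1}{T}\int_{-T}^{T} |u|\,\|c_u\|^2\rd u \to 0$. This does not follow from $\|c\|^2 \in L^1$ alone in full generality, but it does follow from a mild integrability/decay hypothesis; in fact the Cesàro-type argument shows $\frac{1}{T}\int_0^T\big(\int_{|u|>a}\cdots\big) \to 0$ routinely, and one handles $\int_{|u|\le a}|u|\|c_u\|^2\rd u \le a\,\|F\|_4^4/2$ by letting $a$ grow slowly with $T$ — a standard truncation. (Alternatively, if one assumes $c_u$ decays exponentially, as holds for rational stable $F$, the remainder is exponentially small and the estimate is immediate.) I would verify the precise hypotheses available from "$F \in \cH_2 \cap \cH_4$" plus the "additional assumptions on the system" alluded to in the text, and invoke dominated convergence on $\frac{1}{T}\int_{-\infty}^{\infty}\mathbf{1}_{[-T,T]}(u)\frac{|u|}{T}\|c_u\|^2\rd u$, noting the integrand is bounded by $\|c_u\|^2 \in L^1$ and tends to $0$ pointwise. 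The Gaussian moment computation and the change of variables are entirely routine; the asymptotic bookkeeping of the remainder is the only place requiring care.
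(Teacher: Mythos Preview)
Your approach is correct and essentially identical to the paper's: both compute $\cov(|z_s|^2,|z_t|^2)=2\|c_{s-t}\|^2$ via the Isserlis identity, reduce the double integral to $4T\int_0^T(1-u/T)\|c_u\|^2\,\rd u$, and pass to the limit by dominated convergence. Your hedging about the remainder is unnecessary, however: the very dominated-convergence argument you sketch at the end (integrand $\mathbf{1}_{[-T,T]}(u)\frac{|u|}{T}\|c_u\|^2\le \|c_u\|^2\in L^1$ with pointwise limit $0$) already works under the $\cH_4$ hypothesis alone, with no ``additional assumptions'' needed --- this is precisely how the paper handles it, applying DCT directly to $(1-u/T)\|c_u\|^2$.
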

%====================================================================

\begin{proof}
 By applying
Lemma~\ref{lem:bilinear} of Appendix~A to the Gaussian random
vectors $z_s$ and $z_t$ and using (\ref{cov}), it follows that
$
    \cov(|z_s|^2, |z_t|^2)
    =
    2
    \|c_{s-t}\|^2
$.
Hence, the variance of the output energy (\ref{ET})  can be computed as
\begin{equation}
\label{varET}
    \var(\cE_T)
     =
    \int_{[0,T]^2}
    \cov(|z_s|^2, |z_t|^2)
    \rd s\rd t
     =
    2
    \int_{[0,T]^2}
    \|c_{s-t}\|^2
    \rd s\rd t
    =
    4T
        \int_{0}^{T}
        (
            1-u/T
        )
        \|c_u\|^2
        \rd u,
%        \\
%
%    & \sim &
%    4T
%        \int_{0}^{+\infty}
%        \|c_u\|^2
%        \rd u,
%    \quad
%    T \to +\infty.
\end{equation}
where use is made of the property $c_t=c_{-t}^{\rT}$ and the
invariance of the Frobenius norm of a matrix under the transpose.
Since the assumption $F \in \cH_4^{p\x m}$ ensures the square
integrability of the covariance function (\ref{cov}), then
\begin{equation}
\label{varET1}
    \lim_{T\to +\infty}
        \int_{0}^{T}
        (
            1-u/T
        )
        \|c_u\|^2
        \rd u
        =
        \int_{0}^{+\infty}
        \|c_u\|^2
        \rd u
\end{equation}
holds by Lebesgue's dominated convergence
theorem. Since $
    \d_T
        \int_{0}^{T}
        (
            1-u/T
        )
        \|c_u\|^2
        \rd u
        =
        T^{-2}
        \int_{0}^{T}u\|c_u\|^2\rd u
        \> 0
$, the convergence is monotonic. Now,
(\ref{vareTasy}) is obtained by using (\ref{eT}) and combining
(\ref{varET}) and (\ref{varET1}) with (\ref{H4}):
$
    \var(\epsilon_T)
    =
    \var(\cE_T)/T^2
    \sim
    4
        \int_{0}^{+\infty}
        \|c_u\|^2
        \rd u/T
        =
        2\|F\|_4^4/T
$ as $T\to +\infty$.
\end{proof}
%==============================================================================

%By combining (\ref{outputenergyrate}) and (\ref{limeT}) with
%(\ref{vareTasy}), it follows that
%\begin{equation}
%\label{varbarET}
%    \bE((\epsilon_T - \|F\|_2^2)^2)
%    =
%    \var(\epsilon_T)
%    \sim
%    \frac{2\|F\|_4^4}{T},
%    \quad
%    T \to +\infty.
%\end{equation}

In view of a central limit theorem for quadratic functionals of Gaussian processes
\cite[Theorem~2]{Ginovian_1994}, the relation (\ref{vareTasy}) provides the scaling factor for the asymptotic
standard normality of the random variable $\sqrt{T/2} (\epsilon_T - \|F\|_2^2)/\|F\|_4^2$ as $T\to +\infty$. Heuristically,  the root mean square deviation of $\epsilon_T$ from its mean
value $\|F\|_2^2$ is relatively small if
\begin{equation}
\label{T*}
    T
    \gg
    T_*
    :=
    2
    (
        \|F\|_4/\|F\|_2
    )^4.
\end{equation}
The right-hand side of (\ref{T*}) quantifies the time horizon beyond
which the $\cH_2$-norm $\|F\|_2$ manifests itself in the sample
paths of the output energy of the system. On the other hand, for $T
\ll T_*$, the ergodic properties of the system output $Z$ do not
expose themselves since the expected value $\bE \epsilon_T =
\|F\|_2^2$ of the output energy rate is ``indistinguishable'' in the
background of random fluctuations whose standard deviation can be estimated by using
(\ref{vareTasy}) as
$
    \sqrt{\var(\epsilon_T)}
    \sim
    \|F\|_4^2\,
    \sqrt{2/T}
    \gg
    \bE \epsilon_T
$. Thus, the squared $\cH_2$-norm as the
average output energy loses its significance for quantifying the disturbance attenuation capabilities of the system on short time scales $T\ll T_*$. The critical time horizon $T_*$ defined by (\ref{T*})
is similar to the {\it integral time scale} of measurements  in
turbulent flows \cite[pp.~50--51]{Frisch_1995}. As an example, let $Z$ be an Ornstein-Uhlenbeck
process generated from a standard Wiener process $W$ by a single-input single-output system $F$ according to the SDE
\begin{equation}
\label{OU}
    \rd z_t = az_t\rd t + \sqrt{2|a|}\,\rd w_t,
\end{equation}
parameterized by $a <0$.
The covariance function (\ref{cov}) of $Z$ is
$
    c_t
    =
    \re^{a|t|}
$, and
the $\cH_2$
and $\cH_4$-norms of the system $F$, defined by (\ref{H2}) and (\ref{H4}),  are
$
    \|F\|_2 = 1
$
and
$
    \|F\|_4
    =
    |a|^{-1/4}
$.
Therefore, the critical time
horizon (\ref{T*}) takes the form  $    T_*
    =
    2/|a|
$  and coincides with the typical transient time of the process; see Fig.~\ref{fig:energy}.
%==============================================================================
\begin{figure}[htb]
\begin{center}
\includegraphics[width=80mm]{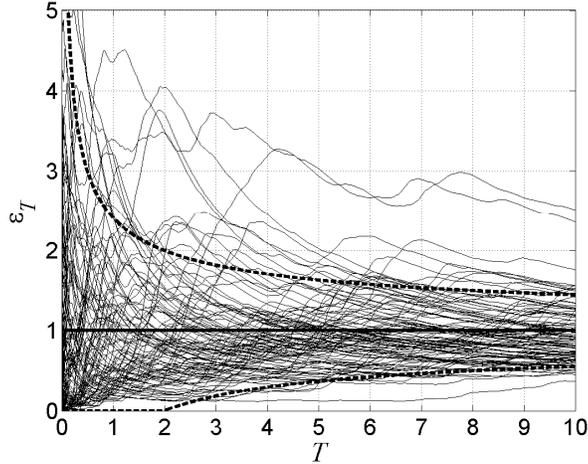}
\caption{
    100 sample paths of $\epsilon_T$ versus $T\<10$ for the Ornstein-Uhlenbeck process
    generated by  (\ref{OU}) with $a=-1$, so that the critical time horizon
     beyond which $\epsilon_T$ exposes relative proximity
    to the limit value $\|F\|_2^2 = 1$ (horizontal bold line) is $T_*= 2$.
    The dashed bold lines localize the typical values of $\epsilon_T$ which form a ``tube'' of half-width
    $\sqrt{T_*/T}$ about the limit.
}
    \label{fig:energy}
\end{center}
\end{figure}
%==============================================================================

%%%%%%%%%%%%%%%%%%%%%%%%%%%%%%%%%%%%%%%%%%%%%%%%%%%%%%%%%%%%%%%%%%%%%%%%%%%%%%%
\section{Cumulants of output energy and Hardy-Schatten norms\label{sec:cumulants}}
%%%%%%%%%%%%%%%%%%%%%%%%%%%%%%%%%%%%%%%%%%%%%%%%%%%%%%%%%%%%%%%%%%%%%%%%%%%%%%%

For a finite time horizon $T>0$, let $C_T$ denote a Toeplitz  integral
operator whose kernel is specified by the covariance function
(\ref{cov}). An $\mR^p$-valued integrable function $\psi:=
(\psi_t)_{0\<t \<T}$ is mapped by $C_T$ to $\phi:= (\phi_s)_{0\< s\< T}$ as
$    \phi_s
    :=
    \int_{0}^{T}
    c_{s-t}\psi_t
    \rd
    t
$.
Suppose $\theta$ is a real parameter
satisfying $0<\theta < 1/\rho(C_T)$, where $\rho(\cdot)$ is the spectral radius. In view of the Fredholm
formula \cite[Theorem~3.10 on p.~36]{Simon_2005} (see also \cite{Ginovian_1994} and references therein),
\begin{equation}
\label{expquad}
    \ln \bE
    \re^{\theta\cE_T/2}
     =
    -
    \frac{1}{2}
    \Tr
    \ln(
        I-\theta C_T
    )
     =
        \frac{1}{2}
        \sum_{k\> 1}
        \theta^k
        \Tr(C_T^k)/k,
\end{equation}
where $I$ is the identity operator. The trace of the $k$-fold iterate
of $C_T$ is computed as
\begin{equation}
\label{TrCT}
    \Tr(C_T^k)
    =
    \int_{[0,T]^k}
    \Tr
    (
        c_{t_0-t_1}
        c_{t_1-t_2}
        \x
        \ldots
        \x
        c_{t_{k-2}-t_{k-1}}
        c_{t_{k-1}-t_0}
    )
    \rd t_0
    \x \ldots \x
    \rd t_{k-1}.
\end{equation}
The expectation in (\ref{expquad}) is the
moment-generating function of $\cE_T$, and
hence,
\begin{equation}
\label{expquad1}
    \ln \bE\re^{\theta\cE_T/2}
    =
    \sum_{k\> 1}
    (\theta/2)^k
    \bK_k(\cE_T)/k!
    =
    \theta
    \left(
        \bE\cE_T
        +
        \theta
        \var(\cE_T)/4
    \right)/2
    +
    O(\theta^3),
    \qquad
    \theta \to 0.
\end{equation}
Here,
$
    \bK_k(\xi)
    :=
    \left.
        \d_v^k
        \ln \bE \re^{v\xi}
    \right|_{v=0}
    =
    P_k(\bE \xi, \ldots, \bE (\xi^k))
$
denotes the $k$th cumulant of a random variable $\xi$, which is
related with the first $k$ moments of $\xi$ via a universal
polynomial $P_k$. The first three of these
polynomials are
$
P_1(\mu_1)
=
    \mu_1
$,
$
P_2(\mu_1, \mu_2)
 =
    \mu_2 - \mu_1^2
$ and
$
P_3(\mu_1, \mu_2, \mu_3)
  =
    \mu_3 - 3\mu_1\mu_2 + 2\mu_1^3$.
By comparing the power series in (\ref{expquad}) and
(\ref{expquad1}) and using the identity $(2r)!! = r!2^r$, it follows that
the $k$th cumulant of the output energy (\ref{ET})
of the system is related to the trace (\ref{TrCT}) as
\begin{equation}
\label{Kzeta}
    \bK_k(\cE_T)
    =
    (2k-2)!!
    \Tr(C_T^k).
\end{equation}
Using (\ref{S}) and extending (\ref{H2}) and (\ref{H4}),
we define, for a positive  integer $k$, a
higher order Hardy norm of the system $F$ by
\begin{equation}
\label{H2k}
    \|F\|_{2k}
     :=
    \sqrt[2k]{
    \frac{1}{2\pi}
    \int_{-\infty}^{+\infty}
    \Tr(S(\omega)^k)
    \rd \omega},
\end{equation}
which reproduces the $\cH_2$
and $\cH_4$-norms for $k=1,2$. Here, $\sqrt[2k]{\Tr(S(\omega)^k)}$
is the Schatten $2k$-norm \cite[p.~441]{HJ_2007} of the matrix
$\wh{F}(\omega)$. The resulting  Hardy-Schatten
space $\cH_{2k}^{p\x m}$ is equipped with the
norm $\|\cdot\|_{2k}$.
%==============================================================================
Similarly to the $\cH_2$-norm, the $\cH_{2k}$-norms (\ref{H2k}) are all
invariant under replacing the system $F$ with its dual $F^{\dagger}$,
\begin{equation}
\label{dual_invariance}
    \|F^{\dagger}\|_{2k}
    =
    \|F\|_{2k},
    \qquad
    k \> 1,
\end{equation}
where $F^{\dagger}$ has the transposed impulse response
$(f_t^{\rT})_{t \> 0}$. Indeed, the transpose of a square matrix
does not modify its spectrum, and for conformable complex
matrices $X$ and $Y$, the matrices $XY$ and
$YX$ share nonzero eigenvalues. Therefore, with the
dependence  on the frequency $\omega$ omitted for brevity,
$
    \Tr
    (
        (
            \wh{F}^{\rT}
            (
                \wh{F}^{\rT}
            )^*
        )^k
    )
    =
    \Tr
    (
        (
            (
                \wh{F}^*
                \wh{F}
            )^k
        )^{\rT}
    )
    =
    \Tr
    (
        (
            \wh{F}
            \wh{F}^*
        )^k
    )
$,
and hence (\ref{dual_invariance}) follows.
%, so that they form a decreasing sequence
%$$
%    \cH_{\infty}^{p\x m}
%    \subset
%    \cH_{2k+2}^{p\x m}
%    \subset
%    \cH_{2k}^{p\x m} \subset \cH_2^{p\x m},
%    \quad
%    k> 1.
%$$
%Moreover, the norms (\ref{H2k}) are convergent  to the
%$\cH_{\infty}$-norm in the sense that for any $F \in
%\cH_{\infty}^{p\x m}$,
%$$
%    \lim_{k \to +\infty}
%    \|F\|_{2k}
%    =
%    \|F\|_{\infty}
%    :=
%    \mathop{\rm ess\, sup}_{-\pi \< \omega \< \pi}
%    \sigma_{\rm max}(\wh{F}(\omega)),
%$$
%where $\sigma_{\rm max}$ is the largest singular value of a matrix.
By the Szeg\H{o} limit theorem for Toeplitz operators
\cite{GS_1958}, under additional integrability conditions,
\begin{equation}
\label{Szego}
    \lim_{T \to +\infty}
    \frac{\Tr \chi(C_T)}{T}
    =
    \frac{1}{2\pi}
    \int_{-\infty}^{+\infty}
    \Tr \chi(S(\omega)) \rd \omega.
\end{equation}
Here, $\chi$ is a function of
a complex variable, satisfying $\chi(0) = 0$ and analytic in a neighbourhood of the
interval $[0,\|F\|_{\infty}^2]$, with $\|F\|_{\infty}$ the $\cH_{\infty}$-norm of $F$.
In view of (\ref{Kzeta}), the application of (\ref{Szego}) to elementary polynomials $\chi(v):=v^k$ yields
the asymptotically linear growth of the output energy cumulants with respect to time:
$    \lim_{T\to +\infty}
    (\bK_k(\cE_T)/T)
    =
    (2k-2)!!
    \|F\|_{2k}^{2k}
$,
provided $F\in \bigcap_{j=1}^{k} \cH_{2j}^{p\x m}$, with
Lemma~\ref{lem:H4} being a particular case for $k=2$.  The application of  (\ref{Szego})
to $\chi(v):= (2/\theta)\ln(1-\theta v)$, with
$0< \theta< \|F\|_{\infty}^{-2}$, gives
\begin{align}
\nonumber
    \frac{2}{\theta}\lim_{T\to +\infty}
        \frac{\ln \bE\re^{\theta\cE_T/2}}{T}
     & =
    -\frac{1}{2\pi\theta}
    \int_{-\infty}^{+\infty}
    \ln\det(I_p-\theta S(\omega))
    \rd \omega\\
\label{quartic_correction}
    & =
    \sum_{k\> 1}
    \theta^{k-1}
    \|F\|_{2k}^{2k}/k
     =
    \bQ_{\theta}(F)
    +
    O(\theta^2),
    \qquad
    \theta \to 0+,
\end{align}
where $I_p$ denotes  the identity matrix of order $p$, and
\begin{equation}
\label{QQ}
    \bQ_{\theta}(F)
    :=
    \|F\|_2^2
    +
    \theta\|F\|_4^4\big/2.
\end{equation}
The expected exponential-of-quadratic functional $\bE\re^{\theta\cE_T/2}$ in (\ref{quartic_correction}) is used as a
performance criterion in the risk-sensitive  and minimum entropy
control theories \cite{MG_1991}. The \textit{quartic norm} $\|F\|_4$
provides the next correction to the squared $\cH_2$-norm $\|F\|_2^2$ in the
series expansion (\ref{quartic_correction})  for small $\theta$.
Therefore, the {\it quadro-quartic} functional $\bQ_{\theta}$, defined by
(\ref{QQ}), can be regarded as a finer truncation of the risk-sensitive performance
index.

%==============================================================================
%We have also used the
%property that the Frobenius norm of a Hermitian matrix $\Sigma$
%satisfies $\|\Sigma\|^2 = \Tr(\Sigma^2)$, where the latter quantity
%is the sum of squared eigenvalues of $\Sigma$.

%$$
%    E_k = \frac{\rd^k}{\rd \theta^k} \bE\re^{\theta|\xi|^2}
%$$

%%%%%%%%%%%%%%%%%%%%%%%%%%%%%%%%%%%%%%%%%%%%%%%%%%%%%%%%%%%%%%%%%%%%%%%%%%%%%%%
\section{Quadro-quartic functional in state space}
%%%%%%%%%%%%%%%%%%%%%%%%%%%%%%%%%%%%%%%%%%%%%%%%%%%%%%%%%%%%%%%%%%%%%%%%%%%%%%%

Let $F$ be a strictly proper LTI system with an $m$-dimensional standard Wiener process
$W$ at the input, $p$-dimensional
output $Z$ and $n$-dimensional state $X$ governed by an Ito SDE:
\begin{equation}
\label{Fstate}
    \rd x_t
     =
     Ax_t\rd t + B\rd w_t,
    \qquad
    z_t
     =
    Cx_t,
\end{equation}
where $A\in \mR^{n\x  n}$,
$B\in \mR^{n\x m}$, $C\in \mR^{p\x n}$ are constant matrices. The state-space
representation  will be written as
%\vskip-5mm
\begin{equation}
\label{FABC}
    F
    =
    (A,B,C)
    =
        \begin{array}{rl}
         &   \!\!\!\!\!\!{}_{\leftarrow n \rightarrow}{}_{\leftarrow m \rightarrow} \\
            \begin{array}{cc}
                {}^n & \hskip-2mm\updownarrow \\
                {}_p & \hskip-2mm\updownarrow
            \end{array} &\hskip-4mm
            \left[
                \begin{array}{c|c}
                    A & B \\
                    \hline
                    C & 0 \\
                \end{array}
            \right]\\
            {}
        \end{array},
\end{equation}
where we have also shown the dimensions, and the horizontal and vertical separators serve to avoid
confusion with an ordinary block matrix.
%; see Fig.~\ref{fig:abcd}.
%%==============================================================================
%\begin{figure}[htb]
%\begin{center}
%\includegraphics[width=6cm]{Images/abcd.ps}
%\end{center}
%\caption{
%    \label{fig:abcd}
%    The system (\ref{Fstate}) with input $W$,
%    output $Z$ and state $X$.
%}
%\end{figure}
%%==============================================================================
The dual system is $
    F^{\dagger}
    =
    (A^{\rT}, C^{\rT},B^{\rT})
$. If the matrix $A$ is Hurwitz, then the mutually dual
controllability and observability Gramians $P$ and $Q$  of (\ref{FABC}) are
unique solutions of the algebraic Lyapunov
 equations
\begin{equation}
\label{PQ}
    AP + PA^{\rT} + BB^{\rT} = 0,
    \qquad
    A^{\rT}Q + Q A  + C^{\rT}C = 0.
\end{equation}
In what follows, an important role is played by the matrix
\begin{equation}
\label{H}
    H:= QP,
\end{equation}
whose spectrum is formed by the squared Hankel singular values of the system (\ref{FABC}).
We will write $\|X\|_M:= \sqrt{\Tr(X^{\rT} M X)}$ for the weighted Frobenius (semi-) norm of a
real matrix $X$ generated by a positive (semi-) definite matrix $M$.

%==============================================================================
\begin{lemma}
\label{lem:H4state} Let $F$ be an asymptotically stable system with
the  state-space realization (\ref{FABC}).  Then the quartic norm (\ref{H4})
 is expressed in terms of the Gramians $P$, $Q$ from (\ref{PQ}) and the matrix $H$ from (\ref{H}) as
\begin{equation}
\label{H4state}
    \|F\|_4^4
     =
    2
    \|(A,PC^{\rT},C)\|_2^2
%    =
%    \left\|
%        \left[
%            \begin{array}{c|c}
%            A & PC^{\rT}\\
%            \hline
%            C & 0
%            \end{array}
%        \right]
%    \right\|_2^2
    =
    2\|PC^{\rT}\|_Q^2
    =
    2
    \|(A,B,B^{\rT}Q)\|_2^2
%    \left\|
%        \left[
%            \begin{array}{c|c}
%            A & B\\
%            \hline
%            B^{\rT}Q& 0
%            \end{array}
%        \right]
%    \right\|_2^2
    =
    2\|QB\|_P^2
      =
    -4\Tr(A^{\rT}H^2).
\end{equation}
\end{lemma}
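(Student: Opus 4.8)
The plan is to start from the frequency-domain formula $\|F\|_4^4 = \frac{1}{2\pi}\int_{-\infty}^{+\infty} \|S(\omega)\|^2\,\rd\omega$ from (\ref{H4}) and recognize the integrand $\|S(\omega)\|^2 = \Tr(S(\omega)^2) = \Tr\big(\wh F(\omega)\wh F(\omega)^*\wh F(\omega)\wh F(\omega)^*\big)$ as the squared $\cH_2$-norm of an auxiliary system whose transfer function has $S$ as a factor. Concretely, $S(\omega) = \wh F(\omega)\wh F(\omega)^*$ is the spectral density, and since $\wh F(\omega)^* = \wh{F^{\dagger}}(-\omega)^{\rT}$-type manipulations show that $S$ is the transfer function (evaluated on the imaginary axis) of the cascade $F F^{\dagger}$, which has state-space realization with output matrix $C$ and input matrix $PC^{\rT}$ — this is exactly the classical fact that $G(v) := C(vI-A)^{-1}PC^{\rT}$ satisfies $\wh G(\omega) + \wh G(\omega)^* = S(\omega)$ on the imaginary axis when $P$ solves the first Lyapunov equation in (\ref{PQ}). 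So the first step is: show $S(\omega) = C(i\omega I - A)^{-1}PC^{\rT} + \big(C(i\omega I-A)^{-1}PC^{\rT}\big)^*$ and hence, because the second summand is anti-analytic, $\frac{1}{2\pi}\int \Tr(S(\omega)^2)\,\rd\omega$ reduces to twice the $\cH_2$-norm squared of $(A, PC^{\rT}, C)$ after expanding the square and discarding cross terms that integrate to zero. That yields the first equality $\|F\|_4^4 = 2\|(A,PC^{\rT},C)\|_2^2$.

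Next I would convert $\|(A,PC^{\rT},C)\|_2^2$ into the Gramian form $\|PC^{\rT}\|_Q^2$. The standard $\cH_2$-norm identity states that for a system $(A,\tilde B,\tilde C)$ with $A$ Hurwitz, $\|(A,\tilde B,\tilde C)\|_2^2 = \Tr(\tilde B^{\rT}\tilde Q \tilde B)$ where $\tilde Q$ is the observability Gramian solving $A^{\rT}\tilde Q + \tilde Q A + \tilde C^{\rT}\tilde C = 0$; for $\tilde C = C$ this Gramian is exactly $Q$ from (\ref{PQ}). Hence $\|(A,PC^{\rT},C)\|_2^2 = \Tr\big((PC^{\rT})^{\rT} Q (PC^{\rT})\big) = \|PC^{\rT}\|_Q^2$. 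The dual chain of equalities — $\|F\|_4^4 = 2\|(A,B,B^{\rT}Q)\|_2^2 = 2\|QB\|_P^2$ — follows the same way but applied to the dual system $F^{\dagger} = (A^{\rT}, C^{\rT}, B^{\rT})$, using the dual-invariance $\|F^{\dagger}\|_4 = \|F\|_4$ already established in (\ref{dual_invariance}): the controllability Gramian $P$ of $F$ is the observability Gramian of $F^{\dagger}$, so running the first two steps on $F^{\dagger}$ produces $2\|QB\|_P^2$, and then one identifies the corresponding cascade realization $(A,B,B^{\rT}Q)$ by the same argument with the roles of $P$ and $Q$, $B$ and $C^{\rT}$ swapped.

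Finally, for the closed-form $-4\Tr(A^{\rT}H^2)$ with $H = QP$, I would start from $\|PC^{\rT}\|_Q^2 = \Tr(CP Q P C^{\rT}) = \Tr(C P H P C^{\rT})$ — wait, more directly $\Tr\big((PC^{\rT})^{\rT}Q(PC^{\rT})\big) = \Tr(C P Q P C^{\rT}) = \Tr(P C^{\rT} C P Q)$. Now use the second Lyapunov equation $C^{\rT}C = -(A^{\rT}Q + QA)$ to substitute, getting $-\Tr\big(P(A^{\rT}Q + QA)PQ\big) = -\Tr(PA^{\rT}QPQ) - \Tr(PQAPQ)$. Then use the first Lyapunov equation in the form $PA^{\rT} = -AP - BB^{\rT}$, or more cleverly note $\Tr(PA^{\rT}QPQ) = \Tr(A^{\rT}QPQP) = \Tr(A^{\rT}H^2)$ and $\Tr(PQAPQ) = \Tr(APQPQ)^{\rT}\!$-type rearrangement gives $\Tr(A P Q P Q)$; using $QA$-substitution from the second Lyapunov equation on the remaining piece and the cyclic property of trace, both terms collapse to $\Tr(A^{\rT}H^2)$ up to the transpose identity $\Tr(A^{\rT}H^2) = \Tr(AH^2)$ — wait, that last equality is \emph{not} generally true since $H$ is not symmetric, so one must be careful. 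The cleanest route is: write $2\|PC^{\rT}\|_Q^2 = 2\Tr(QPQP \cdot (A^{-1}\!\cdot\!0)\ldots)$; actually the reliable computation is to substitute \emph{both} Lyapunov equations once and collect, using only $\Tr(XY)=\Tr(YX)$, so that $C^{\rT}C \mapsto -(A^{\rT}Q+QA)$ and $BB^{\rT}\mapsto -(AP+PA^{\rT})$, after which the cross terms combine into $-4\Tr(A^{\rT}(QP)^2)$. \textbf{The main obstacle} I anticipate is precisely this last bookkeeping: keeping the non-commuting factors $P$, $Q$, $A$, $A^{\rT}$ in the right order and making sure that the two symmetric Lyapunov substitutions are used consistently so the answer comes out as $-4\Tr(A^{\rT}H^2)$ and not some transposed or sign-flipped variant; the frequency-to-state-space reduction in the first step (justifying that the anti-analytic cross terms integrate to zero) is standard but also needs a clean statement.
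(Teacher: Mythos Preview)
Your argument is correct but takes a different route from the paper for the key first equality. The paper works in the \emph{time domain}: it uses the second representation in (\ref{H4}), namely $\|F\|_4^4 = 2\int_0^{+\infty}\|c_t\|^2\,\rd t$, observes that the steady-state covariance function is $c_t = C\re^{At}PC^{\rT}$ for $t\ge 0$ (since $P=\cov(x_t)$), and recognises this as the impulse response of $(A,PC^{\rT},C)$, so that $\int_0^{+\infty}\|c_t\|^2\,\rd t = \|(A,PC^{\rT},C)\|_2^2$ in one line. Your frequency-domain approach via the additive decomposition $S(\omega)=\wh G(\omega)+\wh G(\omega)^*$ with $G=(A,PC^{\rT},C)$, followed by expanding $\Tr S^2$ and killing the analytic/anti-analytic cross terms by contour integration, is a legitimate alternative; it trades the covariance-function shortcut for a standard spectral-factorisation lemma plus a Cauchy-type vanishing argument. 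The paper's route is shorter because the Plancherel step was already absorbed into (\ref{H4}).

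For the last equality your hesitation is unnecessary: a single Lyapunov substitution suffices. From $2\|PC^{\rT}\|_Q^2 = 2\Tr(C^{\rT}CPQP)$ and $C^{\rT}C = -(A^{\rT}Q+QA)$ one gets $-2\Tr(A^{\rT}QPQP)-2\Tr(QAPQP)$; for the second term, cyclicity gives $\Tr(APQPQ)$, and then $\Tr(APQPQ)=\Tr\big((APQPQ)^{\rT}\big)=\Tr(QPQPA^{\rT})=\Tr(A^{\rT}QPQP)=\Tr(A^{\rT}H^2)$, using only the symmetry of $P$ and $Q$ (not of $H$). So both terms are $\Tr(A^{\rT}H^2)$ and the total is $-4\Tr(A^{\rT}H^2)$; your worry that ``$H$ is not symmetric'' is a red herring, and no second Lyapunov substitution is needed.
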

%==============================================================================

\begin{proof}
Let $Z$ be
a stationary Gaussian random process generated by
(\ref{Fstate}), with $W$ a standard Wiener process.  Then the
steady-state covariance function (\ref{cov}) is
\begin{equation}
\label{ct}
    c_t
    =
    C\re^{At} P C^{\rT},
    \qquad
    t\> 0.
\end{equation}
Here, we use the fact that the controllability
Gramian is the steady-state covariance matrix of the state of the system:
$P=\cov(x_t)$. Since the function
$c_t$ in (\ref{ct}) coincides with the impulse response
of the system $(A,PC^{\rT}, C)$, then (\ref{H4}) yields
$    \|F\|_4^4
     =
    2
    \|(A,PC^{\rT},C)\|_2^2
%    \left\|
%        \left[
%            \begin{array}{c|c}
%            A & PC^{\rT}\\
%            \hline
%            C & 0
%            \end{array}
%        \right]
%    \right\|_2^2
    =
    2
    \Tr
    (
        CPQPC^{\rT}
    )
    =
    2\|PC^{\rT}\|_Q^2
$,
which  proves the first two equalities in (\ref{H4state}). Here,
we have also used the property that the system $(A,PC^{\rT},C)$  shares the matrices $A$, $C$ with the underlying
system (\ref{FABC}) and hence, inherits from $F$ the observability
Gramian $Q$. The remaining three equalities in
(\ref{H4state}) follow from the first two by the invariance of the
$\cH_2$ and $\cH_4$-norms under taking the dual of a system, and by  the duality
of the controllability and observability Gramians.
\end{proof}

The controllability and observability Gramians $\Phi$, $\Psi$ of a subsidiary system $(A,PC^{\rT}, B^{\rT}Q)$, which satisfy the algebraic Lyapunov equations
\begin{equation}
\label{Schattenian_Phi_Psi}
    A\Phi + \Phi A^{\rT} +  P C^{\rT}C  P
      =
    0,\qquad
    A^{\rT} \Psi
    +
    \Psi A +  Q BB^{\rT}  Q
      =  0,
\end{equation}
 will be referred to as the controllability and observability \textit{Schattenians} of the system (\ref{FABC}). The representations (\ref{H4state}) imply that
% \begin{equation}
% \label{Schattenian_quartic}
$$    \|F\|_4^4
    =
    2\Tr(C \Phi C^{\rT})
    =
    2\Tr(B^{\rT} \Psi B),
$$
% \end{equation}
and hence, the significance of the Schattenians $\Phi$, $\Psi$ for the quartic norm is analogous to the role which the Gramians $P$, $Q$ play for the $\cH_2$-norm.
%The following theorem is obtained by combining Lemma~\ref{lem:H4state} with the state-space representation of the $\cH_2$-norm.

%==============================================================================
\begin{theorem}
\label{th:QQstate} Let $F$ be an asymptotically stable system with
the  state-space realization (\ref{FABC}).  Then the quadro-quartic
functional (\ref{QQ}) is expressed in terms of the Gramians  $P$, $Q$  from (\ref{PQ}) and the matrix $H$ from (\ref{H}) as
\begin{align}
\nonumber
    \bQ_{\theta}(F)
    & =
    \left\|
        (
            A,
            \begin{bmatrix}
            B & \sqrt{\theta} PC^{\rT}
            \end{bmatrix},
            C
        )
    \right\|_2^2\\
\nonumber
%    \left\|
%        \left[
%            \begin{array}{c|cc}
%            A & B & \sqrt{\theta} PC^{\rT}\\
%            \hline
%            C & 0 & 0
%            \end{array}
%        \right]
%    \right\|_2^2\\
    & =
    \Tr
    (
        (
            BB^{\rT} + \theta PC^{\rT}CP
        )
        Q
    )\\
\nonumber
    & =
    \left\|
        \left(
            A,
            B,
            \begin{bmatrix}
                C \\
\nonumber
                 \sqrt{\theta} B^{\rT}Q
            \end{bmatrix}
        \right)
    \right\|_2^2\\
%    \left\|
%        \left[
%            \begin{array}{c|c}
%            A & B\\
%            \hline
%            C & 0\\
%            \sqrt{\theta} B^{\rT}Q & 0
%            \end{array}
%        \right]
%    \right\|_2^2\\
    \label{QQstate}
    & =
    \Tr
    (
        (
            C^{\rT}C + \theta QBB^{\rT}Q
        )
        P
    )
    =
    -
    2\Tr(A^{\rT}H(I_n+\theta H)).
\end{align}
\end{theorem}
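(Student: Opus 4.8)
The plan is to derive all five representations from Lemma~\ref{lem:H4state}, the classical state-space formula for the $\cH_2$-norm, and the superposition principle for algebraic Lyapunov equations. Recall that for a stable realization $(A,B',C')$ one has $\|(A,B',C')\|_2^2 = \Tr(C'\mathcal{P}C'^{\rT}) = \Tr(B'^{\rT}\mathcal{Q}B')$, where $\mathcal{P}$, $\mathcal{Q}$ solve $A\mathcal{P}+\mathcal{P}A^{\rT}+B'B'^{\rT}=0$ and $A^{\rT}\mathcal{Q}+\mathcal{Q}A+C'^{\rT}C'=0$. Since the first of these equations involves $B'$ only through $B'B'^{\rT}$ and does not involve $C'$, augmenting the input matrix to $[B_1\ B_2]$ replaces $\mathcal{P}$ by the sum of the Gramians associated with $B_1$ and $B_2$ while leaving $\mathcal{Q}$ unchanged, and dually for a vertically stacked output matrix. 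Consequently $\|(A,[B_1\ B_2],C')\|_2^2 = \|(A,B_1,C')\|_2^2 + \|(A,B_2,C')\|_2^2$, and similarly for stacking of $C'$.

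First I would take $B_1=B$, $B_2=\sqrt{\theta}\,PC^{\rT}$, $C'=C$ and combine this additivity with Lemma~\ref{lem:H4state}:
\[
    \big\|(A,[B\ \sqrt{\theta}PC^{\rT}],C)\big\|_2^2
    = \|(A,B,C)\|_2^2 + \theta\,\|(A,PC^{\rT},C)\|_2^2
    = \|F\|_2^2 + \tfrac{\theta}{2}\|F\|_4^4 = \bQ_{\theta}(F),
\]
the last step being the definition (\ref{QQ}). Writing the same $\cH_2$-norm through the observability Gramian $Q$ of $F$ --- which, as in the proof of Lemma~\ref{lem:H4state}, is inherited by the augmented realization --- turns the quantity into $\Tr(B^{\rT}QB) + \theta\,\Tr(CPQPC^{\rT}) = \Tr\big((BB^{\rT}+\theta PC^{\rT}CP)Q\big)$, which is the second displayed line. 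The third and fourth lines are obtained in exactly the same way from the dual realization by stacking $C$ with $\sqrt{\theta}\,B^{\rT}Q$, using the identities $\|F\|_4^4 = 2\|(A,B,B^{\rT}Q)\|_2^2 = 2\|QB\|_P^2$ of Lemma~\ref{lem:H4state} and the fact that the controllability Gramian $P$ of $F$ is inherited by that realization.

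It then remains to reach the compact form $-2\Tr(A^{\rT}H(I_n+\theta H))$. The quartic part is already supplied by Lemma~\ref{lem:H4state} in the shape $\|F\|_4^4 = -4\Tr(A^{\rT}H^2)$, so $\tfrac{\theta}{2}\|F\|_4^4 = -2\theta\,\Tr(A^{\rT}H^2)$. For the quadratic part I would substitute $BB^{\rT} = -(AP+PA^{\rT})$ from (\ref{PQ}) into $\|F\|_2^2 = \Tr(BB^{\rT}Q)$ and use cyclicity of the trace together with the symmetry of $P$ and $Q$: both $\Tr(APQ)$ and $\Tr(PA^{\rT}Q)$ reduce to $\Tr(A^{\rT}QP)=\Tr(A^{\rT}H)$, whence $\|F\|_2^2 = -2\Tr(A^{\rT}H)$. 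Adding the two contributions gives $\bQ_{\theta}(F) = -2\Tr(A^{\rT}H) - 2\theta\,\Tr(A^{\rT}H^2) = -2\Tr(A^{\rT}H(I_n+\theta H))$.

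None of these steps presents a genuine difficulty. The only point needing a moment of care is the observation that a rank-augmentation of $B$ (respectively $C$) perturbs only the controllability (respectively observability) Gramian, and does so additively; this is precisely what lets the two terms of $\bQ_{\theta}$ be read off a single $\cH_2$-norm, after which everything is trace bookkeeping backed by Lemma~\ref{lem:H4state}.
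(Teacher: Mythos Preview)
Your proposal is correct and follows essentially the same route as the paper: combine the $\cH_2$-norm formula with Lemma~\ref{lem:H4state} via the additivity of the $\cH_2$-norm under input (or output) augmentation, obtain the dual pair of representations by the same argument, and then reduce to $-2\Tr(A^{\rT}H(I_n+\theta H))$ using the Lyapunov equations. The only cosmetic difference is that the paper substitutes both Lyapunov equations into $\Tr((BB^{\rT}+\theta PC^{\rT}CP)Q)$ in one pass, whereas you handle the $\|F\|_2^2$ and $\|F\|_4^4$ contributions separately by quoting $\|F\|_4^4=-4\Tr(A^{\rT}H^2)$ from Lemma~\ref{lem:H4state}; the content is the same.
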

%==============================================================================
\begin{proof}
Substitution of
$\|F\|_2 = \sqrt{\Tr (B^{\rT}QB)}$ and the first two equalities from
(\ref{H4state}) into (\ref{QQ}) yields
\begin{align*}
    \bQ_{\theta}(F)
    & =
    \|(A,B,C)\|_2^2
    +
    \theta
    \|(A,PC^{\rT},C)\|_2^2\\
    & =
    \|
        (
            A,
                \begin{bmatrix}
                    B & \sqrt{\theta}PC^{\rT}
                \end{bmatrix},
            C
        )
    \|_2^2
%    =
%    \Tr
%    (
%        BB^{\rT}Q
%    )
%    +
%    \theta
%    \Tr
%    (
%        CPQPC^{\rT}
%    )\\
    =
    \Tr
    (
        (BB^{\rT} + \theta PC^{\rT}CP)
        Q
    ),
\end{align*}
which establishes the first two equalities in (\ref{QQstate}).
 The third and fourth representations of the quadro-quartic
functional are obtained from the first two by the duality argument
or directly from the third and fourth equalities in (\ref{H4state}).
The last representation of $\bQ_{\theta}(F)$ in
(\ref{QQstate}) follows from the previous ones by using the Lyapunov
equations (\ref{PQ}):
\begin{align*}
    \bQ_{\theta}(F)
    & =
    \Tr
    (
        (BB^{\rT} + \theta PC^{\rT}CP)
        Q
    )\\
    & =
    -
    \Tr
    (
        (AP+PA^{\rT} + \theta P(A^{\rT}Q + QA)P)
        Q
    )\\
    & =
    -2
    \Tr
    (
        A^{\rT}QP
        +
        \theta
        A^{\rT}(QP)^2
    )
    =
    -2
    \Tr
    (
        A^{\rT}H
        (
            I_n
            +
            \theta
            H
        )
    ).
\end{align*}
\end{proof}
%==============================================================================

%\textit{Remark~3:} The arithmetic mean of the traces from (\ref{QQstate})
%provides a ``symmetrized'' representation  of the  quadro-quartic
%functional:
%\begin{equation}
%\label{symQQstate}
%    \bQ_{\theta}(F)
%    =
%    \frac{1}{2}
%    \Tr
%    (
%        C^{\rT}CP
%        +
%        BB^{\rT}Q
%        +
%        \theta
%        (
%            C^{\rT}CPQP
%            +
%            BB^{\rT}QPQ
%        )
%    ).
%\end{equation}
%

%%%%%%%%%%%%%%%%%%%%%%%%%%%%%%%%%%%%%%%%%%%%%%%%%%%%%%%%%%%%%%%%%%%%%%%%%%%%%%%
\section{Linear quadro-quartic Gaussian control problem}
%%%%%%%%%%%%%%%%%%%%%%%%%%%%%%%%%%%%%%%%%%%%%%%%%%%%%%%%%%%%%%%%%%%%%%%%%%%%%%%

Consider a plant with an  $m_1$-dimensional standard Wiener
process $W$ as the input disturbance and an $m_2$-dimensional input
control signal $U$. The outputs of the system are a
$p_1$-dimensional to-be-controlled signal $Z$ and a
$p_2$-dimensional observation signal $Y$.
Also, the system has an $n$-dimensional state $X$. These processes are governed by
\begin{eqnarray}
\label{nextx}
    \rd x_t
    & = &
    A x_t\rd t \ + \ B_1 \rd w_t + B_2 u_t\rd t,\\
\label{z}
    z_t
    & = &
    C_1 x_t \ \ \, + \qquad \quad\quad \ D_{12} u_t,\\
\label{y}
    \rd y_t
    & = &
    C_2 x_t\rd t + D_{21} \rd w_t.
\end{eqnarray}
Here,
$
    A\in \mR^{n\x n}$,
$
    B_k\in \mR^{n\x m_k}$,
$
    C_j\in \mR^{p_j\x n}
$,
$
    D_{jk}\in \mR^{p_j\x m_k}
$, with $D_{11}=0$ and
$D_{22}=0$. The control signal $U$ is generated at the output of a
controller $K$ with input $Y$. We consider a strictly proper LTI
controller
\begin{equation}
\label{K}
    K
    =
        \begin{array}{rl}
         &   \!\!\!\!\!{}_{\leftarrow n \rightarrow}{}_{\leftarrow p_2 \rightarrow} \\
            \begin{array}{cc}
                {}^n & \hskip-3mm\updownarrow \\
                {}_{m_2} & \hskip-3mm\updownarrow
            \end{array} &\hskip-4mm
            \left[
                \begin{array}{c|c}
                    a \ & \ b \, \\
                    \hline
                    c \ & \ 0 \,\\
                \end{array}
            \right]\\
            {}
        \end{array},
\end{equation}
with an $n$-dimensional state $\Xi$. It is driven by the observation
$Y$ and produces the output $U$ as
\begin{equation}
\label{y_to_xi_yxi_to_u}
    \rd \xi_t
     =
    a\xi_t\rd t + b\rd y_t,
    \qquad
    u_t
     =
    c\xi_t,
\end{equation}
where $a\in \mR^{n\x n}$, $b\in \mR^{n\x p_2}$, $c\in \mR^{m_2\x
n}$. The
closed-loop system
%\vskip-7mm
\begin{equation}
\label{F}
    F
    :=
        \begin{array}{rl}
         &   \!\!\!\!\!{}_{\leftarrow 2n \rightarrow}{}_{\,\leftarrow m_1 \rightarrow} \\
            \begin{array}{cc}
                {}^{2n} & \hskip-2mm\updownarrow \\
                {}_{p_1} & \hskip-2mm\updownarrow
            \end{array} &\hskip-4mm
            \left[
                \begin{array}{c|c}
                    \ \cA \ & \ \cB \, \\
                    \hline
                    \ \cC\ & \ 0\, \\
                \end{array}
            \right]\\
            {}
        \end{array}    
%    \left[
%        \begin{array}{c|c}
%            \cA & \cB\\
%            \hline
%            \cC & 0
%        \end{array}
%    \right]
    =
    \left[
        \begin{array}{cc|c}
                  a &  bC_2 & bD_{21}\\
                  B_2 c & A & B_1\\
            \hline
                  D_{12} c & C_1 & 0
        \end{array}
    \right],
\end{equation}
governed
by (\ref{nextx})--(\ref{y_to_xi_yxi_to_u}) and depicted in Fig.~\ref{fig:plant},
%==============================================================================
\begin{figure}[htb]
\begin{center}
\unitlength=1.0mm
\begin{picture}(50.00,25.00)
    \put(20,15){\framebox(10,10)[cc]{{\small plant}}}
    \put(40,23){\vector(-1,0){10}}
    \put(20,23){\vector(-1,0){10}}
    \put(42,23){\makebox(0,0)[lc]{$W$}}
    \put(8,23){\makebox(0,0)[rc]{$Z$}}
    \put(40,17){\vector(-1,0){10}}
    \put(20,17){\line(-1,0){10}}
    \put(10,17){\line(0,-1){12}}
    \put(10,5){\vector(1,0){10}}
    \put(30,5){\line(1,0){10}}
    \put(40,5){\line(0,1){12}}
    \put(42,11){\makebox(0,0)[lc]{$U$}}
    \put(8,11){\makebox(0,0)[rc]{$Y$}}
    \put(20,0){\framebox(10,10)[cc]{$K$}}
\end{picture}
\caption{
    The closed-loop system $F$ with input $W$ and
    output $Z$.}
\label{fig:plant}
\end{center}%\vskip-5mm
\end{figure}
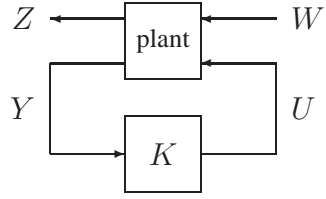
%==============================================================================
has the $2n$-dimensional
combined state $(\Xi,X)$. We formulate a {\it linear
quadro-quartic Gaussian} (LQQG) control problem as the minimization of the
functional (\ref{QQ}) over $n$-dimensional controllers (\ref{K})
such that the matrix $\cA$ of the closed-loop system in
(\ref{F}) is Hurwitz:
\begin{equation}
\label{LQQG}
    \bQ:=\bQ_{\theta}(F)
    =
    -2\Tr(\cA^{\rT}H(I_{2n}+\theta H))
    \longrightarrow
    \min,
    \qquad
    K\
    {\rm stabilizes}\
    F.
\end{equation}
Here, $\theta\>0$ is a given parameter as before, and use is made of Theorem~\ref{th:QQstate}, so that the matrix $H$ is associated by (\ref{H}) with the Gramians $P$, $Q$ of the closed-loop system satisfying  the algebraic Lyapunov equations
\begin{equation}
\label{cPQ}
    \cA P
    +
     P \cA^{\rT}
    +
    \cB\cB^{\rT}
    = 0,
\qquad
    \cA^{\rT} Q
    +
     Q  \cA
    +
    \cC^{\rT}\cC
    = 0.
\end{equation}
In the case $\theta = 0$, the LQQG problem (\ref{LQQG}) reduces to the standard linear quadratic Gaussian (LQG) control problem. For $\theta > 0$, the LQQG problem is a compromise between minimizing the mean value and the variance of the output energy per unit time, with $\theta$ becoming  the relative weight of the quartic norm.

%%%%%%%%%%%%%%%%%%%%%%%%%%%%%%%%%%%%%%%%%%%%%%%%%%%%%%%%%%%%%%%%%%%%%%%%%%%%%%%
\section{Matrices with $\Gamma$-shaped sparsity\label{sec:bGamma}}
%%%%%%%%%%%%%%%%%%%%%%%%%%%%%%%%%%%%%%%%%%%%%%%%%%%%%%%%%%%%%%%%%%%%%%%%%%%%%%%

Since it is convenient to assemble the state-space realization matrices into a matrix with
``$\Gamma$-shaped'' sparsity,  we denote the set of real $(r+p)\x(r+m)$-matrices
with zero bottom-right block of size $(p\x m)$ by
%\vskip-3mm
\begin{equation}
\label{bGamma}
    \bGamma_{r,m,p}
    :=
    \left\{
                 \begin{bmatrix}
                     \rho  &  \sigma \\
                     \tau & 0
                 \end{bmatrix}:\ 
         \rho \in \mR^{r\x r},\ 
         \sigma \in \mR^{r\x m},\ 
         \tau \in \mR^{p\x r}
    \right\}.
\end{equation}
This is a linear subspace of $\mR^{(r+p)\x(r+m)}$ which inherits
the Frobenius inner product of matrices.
Let $\bPi_{r,m,p}$ denote the orthogonal projection onto
$\bGamma_{r,m,p}$ which pads the bottom-right $(p\x m)$-block of a $(r+p)\x (r+m)$-matrix with zeros:
\begin{equation}
\label{bPi}
    \bPi_{r,m,p}
    \left(
                 \begin{bmatrix}
                     \rho  &  \sigma \\
                     \tau & \varpi
                 \end{bmatrix}
    \right)
    =
        \begin{bmatrix}
            \rho  &  \sigma \\
            \tau & 0
        \end{bmatrix}.
\end{equation}
%The subscripts in $\bGamma_{r,m,p}$ and $\bPi_{r,m,p}$ will sometimes be omitted for brevity.
The dependence of the   closed-loop system matrices $\cA$, $\cB$,
$\cC$ on the controller matrices $a$, $b$, $c$ in  (\ref{F}) can be written as
\begin{equation}
\label{ABC}
    \Gamma
     :=
        \begin{bmatrix}
            \cA & \cB\\
            \cC & 0
        \end{bmatrix}
    =
    \Gamma_0
    +
    \Gamma_1
    \gamma
    \Gamma_2,
    \qquad
    \gamma
    :=
        \begin{bmatrix}
            a & b\\
            c & 0
        \end{bmatrix}.
\end{equation}
The affine map $\bGamma_{n,p_2,m_2}\ni \gamma \mapsto \Gamma \in
\bGamma_{2n,m_1,p_1}$ is specified completely  by three
matrices
\begin{equation}
\label{Gamma012}
    \Gamma_0
    :=
        \begin{bmatrix}
            \!\!0_n \, \ 0 \ \ \, 0\\
            0 \ \  A \ \  B_1\\
            \!\!\!0 \ \  C_1 \ \, 0
        \end{bmatrix},
        \qquad
    \Gamma_1
    :=
        \begin{bmatrix}
            I_n & \!\!\!0\\
            0 & \!\!\!B_2\\
            0 & \!\!\!D_{12}
        \end{bmatrix},\qquad
    \Gamma_2
    :=
        \begin{bmatrix}
            \!\!\!\!\!\!\!I_n \  \, 0 \ \  \ 0\\
            0 \ \ C_2 \  D_{21}
        \end{bmatrix},
\end{equation}
where $0_n$ denotes the $(n\x n)$-matrix of zeros.

%%%%%%%%%%%%%%%%%%%%%%%%%%%%%%%%%%%%%%%%%%%%%%%%%%%%%%%%%%%%%%%%%%%%%%%%%%%%%%%
\section{Equations for optimal controller}
%%%%%%%%%%%%%%%%%%%%%%%%%%%%%%%%%%%%%%%%%%%%%%%%%%%%%%%%%%%%%%%%%%%%%%%%%%%%%%%

We now obtain necessary conditions of optimality in the class
(\ref{K})  of  $n$-dimensional stabilizing controllers $K$ for the LQQG problem (\ref{LQQG}). To this end,  we compute the Frechet
derivatives of the quadro-quartic functional
 of the closed-loop system $F$
as a composite function $\gamma \mapsto \Gamma \mapsto \bQ$ of the controller matrices $a$, $b$, $c$ and equate the derivatives to zero.
The differentiation is carried out in two steps: we first consider $\cA$, $\cB$, $\cC$ to be independent variables, and then take into account their dependence on $a$, $b$, $c$.

%====================================================================
\begin{lemma}
\label{lem:dLdGamma} The Frechet derivatives of the quadro-quartic functional $\bQ$ with respect
to the closed-loop system matrices $\cA$, $\cB$, $\cC$, assembled into the matrix $\Gamma$
in (\ref{ABC}), are computed as
\begin{equation}
\label{dLdGamma}
    \d_{\Gamma} \bQ
     :=
        \begin{bmatrix}
             \d_{\cA} \bQ & \d_{\cB} \bQ \\
             \d_{\cC} \bQ & 0
        \end{bmatrix}
    =
     2
        \begin{bmatrix}
             R  & \Omega \cB\\
            \cC \Ups & 0
        \end{bmatrix}.
\end{equation}
Here,
\begin{eqnarray}
\label{Ups}
    \Ups
     & := &
     P  + \theta ( P  H  + \Phi),\\
\label{Omega}
    \Omega
     & := &
     Q  + \theta ( H  Q  + \Psi),\\
\label{R}     R  & := &
     H + \theta (H^2 + Q\Phi + \Psi P),%\\
%
%& = &
%     Q  \Ups + \Omega P  -  H  (I+\theta  H)
\end{eqnarray}
with $P$, $Q$ the Gramians from (\ref{cPQ}); the matrix $H$ is given by (\ref{H}), and $\Phi$, $\Psi$ are the controllability and observability Schattenians of $F$ satisfying the algebraic Lyapunov equations
\begin{equation}
\label{Phi_Psi}
    \cA\Phi + \Phi\cA^{\rT} +  P \cC^{\rT}\cC  P
     = 0,
     \qquad
    \cA^{\rT}\Psi
    +
    \Psi\cA +  Q \cB\cB^{\rT}  Q
     =  0.
\end{equation}
\end{lemma}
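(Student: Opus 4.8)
The plan is to start from the last representation of the quadro-quartic functional in Theorem~\ref{th:QQstate}, namely $\bQ = -2\Tr(\cA^{\rT}H(I_{2n}+\theta H)) = -2\Tr(\cA^{\rT}H) - 2\theta\Tr(\cA^{\rT}H^2)$ with $H=QP$, and to differentiate it treating $\cA$, $\cB$, $\cC$ as independent matrix variables subject to the constraint that $P$ and $Q$ solve the Lyapunov equations (\ref{cPQ}). The natural device is to view $P$ and $Q$ as implicitly defined functions of $(\cA,\cB,\cC)$ and to compute their Frechet derivatives by differentiating (\ref{cPQ}): a perturbation $\delta\cA$, $\delta\cB$ produces $\cA\,\delta P + \delta P\,\cA^{\rT} + (\delta\cA\, P + P\,\delta\cA^{\rT} + \delta\cB\,\cB^{\rT} + \cB\,\delta\cB^{\rT}) = 0$, and similarly for $\delta Q$. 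Since $\cA$ is Hurwitz these Lyapunov equations have unique solutions, so $\delta P$ and $\delta Q$ are well defined linear functions of the perturbations; the familiar trick is that $\Tr(M\,\delta P)$ for a fixed $M$ can be re-expressed, via the \emph{adjoint} Lyapunov equation, as a trace against the forcing term $\delta\cA\, P + P\,\delta\cA^{\rT} + \delta\cB\,\cB^{\rT} + \cB\,\delta\cB^{\rT}$. This is exactly the mechanism that makes the observability-type Gramian $Q$ show up as the sensitivity of a controllability-type quantity and vice versa, and it is how the Schattenians $\Phi$, $\Psi$ of (\ref{Phi_Psi}) will enter: $\Phi$ is forced by $P\cC^{\rT}\cC P$, which is the natural adjoint partner of the term $\theta\, PC^{\rT}CP$ appearing in the $\cB\cB^{\rT}+\theta PC^{\rT}CP$ form of $\bQ$ from Theorem~\ref{th:QQstate}, and $\Psi$ likewise.

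Concretely, I would proceed as follows. First, rather than differentiating the $-2\Tr(\cA^{\rT}H(I+\theta H))$ form directly (which entangles $\cA$, $P$, $Q$ unpleasantly through $H=QP$), I would use the equivalent representations $\bQ = \Tr((\cB\cB^{\rT}+\theta P\cC^{\rT}\cC P)Q)$ and $\bQ = \Tr((\cC^{\rT}\cC + \theta Q\cB\cB^{\rT}Q)P)$ from (\ref{QQstate}) as a cross-check, since the derivative with respect to $\cB$ is most transparent in the first and the derivative with respect to $\cC$ in the second. For $\d_{\cB}\bQ$: in the form $\Tr((\cB\cB^{\rT}+\theta P\cC^{\rT}\cC P)Q)$, the explicit $\cB$-dependence contributes $2Q\cB$ paired against $\delta\cB$; the implicit dependence through $P$ contributes $\theta\Tr(\cC^{\rT}\cC(\delta P))$-type terms, and through $Q$ there is none explicitly but $Q$ also depends on $\cB$ via its Lyapunov equation. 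Tracking all three contributions and collapsing the $\delta P$, $\delta Q$ terms using the adjoint Lyapunov equations should assemble precisely $2\Omega\cB$ with $\Omega = Q + \theta(HQ+\Psi)$. The analogous computation in the dual form yields $\d_{\cC}\bQ = 2\cC\Ups$ with $\Ups = P + \theta(PH+\Phi)$; here the duality $P\leftrightarrow Q$, $\cB\leftrightarrow\cC^{\rT}$, $\Phi\leftrightarrow\Psi$, $\cA\leftrightarrow\cA^{\rT}$ that underlies (\ref{H4state}) and (\ref{dual_invariance}) can be invoked to avoid redoing the work.

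The genuinely delicate piece is $\d_{\cA}\bQ = 2R$ with $R = H + \theta(H^2 + Q\Phi + \Psi P)$. Here $\cA$ appears directly in the $-2\Tr(\cA^{\rT}H(I+\theta H))$ form (giving the leading $H + \theta H^2$ term immediately), and also implicitly through both Gramians $P$ and $Q$ and, at order $\theta$, through the Schattenians. I would differentiate $\bQ = -2\Tr(\cA^{\rT}QP) - 2\theta\Tr(\cA^{\rT}QPQP)$: the $\delta\cA$ from the explicit factor gives $-2PQ - 2\theta (PQ)^2$ transposed appropriately, i.e. contributes $H + \theta H^2$ to $R$; the $\delta P$ and $\delta Q$ pieces must be converted, via their Lyapunov sensitivity equations, into traces against $\delta\cA$, and the bookkeeping is where the cross terms $Q\Phi$ and $\Psi P$ are produced. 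The main obstacle is precisely this: ensuring that the several $O(\theta)$ contributions from $\delta P$ and $\delta Q$ (each of which, when resolved against the adjoint Lyapunov equation, spawns a term involving either $\Phi$ or $\Psi$) combine into exactly $\theta(Q\Phi + \Psi P)$ and nothing more, with all spurious pieces cancelling. I would organize this by writing every $\Tr(\,\cdot\,\delta P)$ as $\Tr(G_P \cdot (\text{forcing of }\delta P))$ where $G_P$ solves the adjoint equation — and recognizing $G_P$, $G_Q$ as (combinations of) $Q$, $P$, $\Psi$, $\Phi$ — then reading off the coefficient of $\delta\cA$ in the resulting expression. Finally, assembling $\d_{\cA}\bQ$, $\d_{\cB}\bQ$, $\d_{\cC}\bQ$ into the block matrix $\d_\Gamma\bQ$ of (\ref{dLdGamma}) is immediate, completing the proof.
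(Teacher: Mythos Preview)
Your plan is workable and would reach the stated formula, but it is organized differently from the paper. The paper's proof is a two-line application of linearity: since $\bQ = \|F\|_2^2 + \tfrac{\theta}{2}\|F\|_4^4$, one has $\d_\Gamma\bQ = \d_\Gamma(\|F\|_2^2) + \tfrac{\theta}{2}\,\d_\Gamma(\|F\|_4^4)$, and the two derivatives are supplied by separate appendix lemmas (Lemmas~\ref{lem:diffH2} and~\ref{lem:diffH4}). The real work is in Lemma~\ref{lem:diffH4}, whose proof exploits the representation $\|F\|_4^4 = 2\|(\cA,P\cC^{\rT},\cC)\|_2^2 = 2\|(\cA,\cB,\cB^{\rT}Q)\|_2^2$ from Lemma~\ref{lem:H4state} to reduce the quartic-norm derivative to $\cH_2$-derivatives of subsidiary systems. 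This modular device makes the $\d_{\cA}$ computation --- the piece you rightly flag as delicate --- much cleaner, because $\Phi$ and $\Psi$ arise naturally as Gramians of those subsidiary systems rather than from ad hoc adjoint-Lyapunov bookkeeping. Your direct differentiation of the combined Theorem~\ref{th:QQstate} forms uses the same Lyapunov sensitivity mechanism but trades that structural shortcut for more algebra; what the paper's route buys is a transparent explanation of why exactly $Q\Phi + \Psi P$ appears in $R$.

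One small slip: you write that ``$Q$ also depends on $\cB$ via its Lyapunov equation,'' but the observability Gramian solves $\cA^{\rT}Q + Q\cA + \cC^{\rT}\cC = 0$ and is independent of $\cB$. This is harmless here (the contribution you would compute is zero), but it is worth correcting before carrying out the calculation.
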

%====================================================================

\begin{proof}
By recalling (\ref{QQ}) and applying
Lemmas~\ref{lem:diffH2}, \ref{lem:diffH4} of Appendices~B, C to the closed-loop system $F$, it follows that
$$
    \d_{\Gamma}
    \bQ
    =
    \d_{\Gamma}
    (\|F\|_2^2)
    +
    \theta
    \d_{\Gamma}
    (\|F\|_4^4)/2\\
    = 2
    \begin{bmatrix}
        H & Q\cB\\
        \cC P & 0
    \end{bmatrix}
    +
    2\theta
    \begin{bmatrix}
        H^2 + Q\Phi + \Psi P  & (HQ + \Psi)\cB\\
        \cC (PH+\Phi) & 0
    \end{bmatrix}, 
$$
which, in view of the notations (\ref{Ups})--(\ref{R}), implies (\ref{dLdGamma}).
\end{proof}
%====================================================================

%By combining Lemmas~\ref{lem:PhiPsi} and \ref{lem:dLdGamma}, it
%follows that at a critical point of the Lagrange function $\Lambda$,
%the matrix $ R $ from (\ref{R}) takes the form
%\begin{multline*}
%     R
%    =
%    ( Q  + \theta( H  Q +\Psi)) P
%    +
%     Q
%    ( P  + \theta( P  H +\Phi))\\
%    -
%     H (I+\theta  H )
%    =
%     H (I+\theta  H )
%    +
%    \theta
%    (\Psi P  +  Q  \Phi).
%\end{multline*}
%==============================================================================

The Gramians $ P $, $ Q $ of the closed-loop system and related matrices (that is, $H$, $\Phi$,
$\Psi$, $\Ups$, $\Omega$, $ R $) inherit the four $(n\x n)$-block structure of the
matrix $\cA$ in (\ref{F}). The blocks are numbered as follows:
%\vskip-5mm
\begin{equation}
\label{blocks}
    \cA
    :=
    \begin{array}{cc}
    \!{}_{\leftarrow n \rightarrow}\  {}_{\leftarrow n\rightarrow} &\\
            \begin{bmatrix}
                \cA _{11} & \cA _{12}\\
                \cA _{21} & \cA _{22}
            \end{bmatrix}
    &\!\!\!\!\!
        \begin{matrix}
            \updownarrow\!{}^n\\
            \updownarrow\!{}_n
        \end{matrix}\\
        {}
    \end{array}
    =
    \begin{array}{cc}
    \!{}_{\leftarrow n \rightarrow}\  {}_{\leftarrow n\rightarrow} &\\
            \begin{bmatrix}
                \cA _{\bullet 1} & \cA _{\bullet 2}
            \end{bmatrix}
    &\!\!\!\!\!
            \updownarrow\!{}^{2n}
        \\ {}
    \end{array}
    =
    \begin{array}{cc}
    {}_{\leftarrow 2n \rightarrow}\\
            \begin{bmatrix}
                \cA _{1\bullet}\\
                \cA _{2\bullet}
            \end{bmatrix}
    &\!\!\!\!\!
        \begin{matrix}
            \updownarrow\!{}^n\\
            \updownarrow\!{}_n
        \end{matrix}\\
        {}
    \end{array}.
\end{equation}
In this notation, the $(\cdot)_{11}$ blocks are associated with the
controller state, and the $(\cdot)_{22}$ blocks pertain to the
plant state.

%====================================================================
\begin{lemma}
\label{lem:dLdgamma} The Frechet derivatives of the quadro-quartic functional $\bQ$
of the closed-loop system (\ref{F}) with respect to the
controller matrices $a$, $b$, $c$, assembled into the matrix
$\gamma$ in (\ref{ABC}), are computed as
\begin{equation}
 \label{dLdabc0}
    \d_{\gamma} \bQ
     =
        \begin{bmatrix}
             \d_a \bQ & \d_b \bQ \\
             \d_c \bQ & 0
        \end{bmatrix}
    =
    2
        \begin{bmatrix}
             R _{11}
            &  R _{12}C_2^{\rT} + \Omega_{1\bullet} \cB D_{21}^{\rT}\\
            B_2^{\rT} R _{21} +  D_{12}^{\rT}\cC \Ups_{\bullet 1} & 0
        \end{bmatrix},
\end{equation}
where the matrices $\Ups$, $\Omega$, $R$ are defined by (\ref{Ups})--(\ref{R}).
\end{lemma}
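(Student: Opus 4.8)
The plan is to combine Lemma~\ref{lem:dLdGamma}, which provides $\d_\Gamma \bQ$ as a function of the closed-loop matrices $\cA$, $\cB$, $\cC$, with the affine dependence (\ref{ABC}) of $\Gamma$ on the controller matrix $\gamma$, and then apply the chain rule for Frechet derivatives. Since $\Gamma = \Gamma_0 + \Gamma_1 \gamma \Gamma_2$ is affine in $\gamma$, the differential of the composite map $\gamma \mapsto \Gamma \mapsto \bQ$ at a given $\gamma$ acts on an increment $\delta\gamma \in \bGamma_{n,p_2,m_2}$ by $\langle \d_\Gamma \bQ, \Gamma_1 \,\delta\gamma\, \Gamma_2\rangle$. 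Using the adjoint property of the Frobenius inner product, $\langle M, \Gamma_1 N \Gamma_2\rangle = \langle \Gamma_1^{\rT} M \Gamma_2^{\rT}, N\rangle$, this equals $\langle \Gamma_1^{\rT}(\d_\Gamma \bQ)\Gamma_2^{\rT}, \delta\gamma\rangle$. Because the increment $\delta\gamma$ is constrained to lie in the subspace $\bGamma_{n,p_2,m_2}$, the derivative is the orthogonal projection of $\Gamma_1^{\rT}(\d_\Gamma \bQ)\Gamma_2^{\rT}$ onto that subspace, i.e.\ $\d_\gamma \bQ = \bPi_{n,p_2,m_2}\big(\Gamma_1^{\rT}(\d_\Gamma \bQ)\Gamma_2^{\rT}\big)$.

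The remaining work is then a direct block-matrix computation. Substituting the expression (\ref{dLdGamma}) for $\d_\Gamma \bQ$ together with the explicit forms (\ref{Gamma012}) of $\Gamma_1$ and $\Gamma_2$, I would multiply out $\Gamma_1^{\rT}\big(2\begin{bmatrix} R & \Omega\cB\\ \cC\Ups & 0\end{bmatrix}\big)\Gamma_2^{\rT}$. The left factor $\Gamma_1^{\rT} = \begin{bmatrix} I_n & 0 & 0\\ 0 & B_2^{\rT} & D_{12}^{\rT}\end{bmatrix}$ picks out, in the first block row, the top $n$ rows of $\d_\Gamma\bQ$ and, in the second block row, the combination $B_2^{\rT}(\cdot) + D_{12}^{\rT}(\cdot)$ of the plant-state rows and the output rows. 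Symmetrically, the right factor $\Gamma_2^{\rT} = \begin{bmatrix} I_n & 0\\ 0 & C_2^{\rT}\\ 0 & D_{21}^{\rT}\end{bmatrix}$ selects, in the first block column, the left $n$ columns and, in the second block column, the combination $(\cdot)C_2^{\rT} + (\cdot)D_{21}^{\rT}$ of the plant-state columns and the disturbance columns. Keeping track of the $(n\x n)$-block partition (\ref{blocks}) of $R$, $\Omega$, $\Ups$ and using the row/column sub-block notation $R_{i\bullet}$, $\Omega_{i\bullet}$, $\Ups_{\bullet j}$, the four entries collapse to $R_{11}$ in the $(1,1)$ position; $R_{12}C_2^{\rT} + \Omega_{1\bullet}\cB D_{21}^{\rT}$ in the $(1,2)$ position; $B_2^{\rT}R_{21} + D_{12}^{\rT}\cC\Ups_{\bullet 1}$ in the $(2,1)$ position; and a $(p_2\x m_2)$-block in the $(2,2)$ position that is exactly what the projection $\bPi_{n,p_2,m_2}$ sets to zero. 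The overall scalar factor $2$ is carried through, yielding (\ref{dLdabc0}).

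The only genuine subtlety, rather than an obstacle, is bookkeeping: one must be careful that the $(\cdot)_{11}$ blocks of the Gramian-related matrices correspond to the controller state $\Xi$, which sits in the \emph{first} $n$ coordinates of the combined state because of the ordering chosen in (\ref{F}), so that $\Gamma_1$ and $\Gamma_2$ act on the correct block rows and columns; and that the projection $\bPi_{n,p_2,m_2}$ is legitimately the adjoint-side operator here precisely because $\gamma$ ranges over $\bGamma_{n,p_2,m_2}$, so only the $\bGamma$-component of $\Gamma_1^{\rT}(\d_\Gamma\bQ)\Gamma_2^{\rT}$ is ``seen'' by admissible increments. Once these points are in place, no estimates or limiting arguments are needed; the identity (\ref{dLdabc0}) follows by matrix algebra, and the matrices $\Ups$, $\Omega$, $R$ entering it are exactly those already defined in (\ref{Ups})--(\ref{R}).
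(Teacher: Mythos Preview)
Your proposal is correct and follows essentially the same approach as the paper: both derive $\d_{\gamma}\bQ = \bPi_{n,p_2,m_2}(\Gamma_1^{\rT}\,\d_{\Gamma}\bQ\,\Gamma_2^{\rT})$ from the chain rule and the affine dependence $\Gamma = \Gamma_0 + \Gamma_1\gamma\Gamma_2$, then substitute (\ref{dLdGamma}) and (\ref{Gamma012}) and read off the blocks. Your remarks on the block bookkeeping and the role of the projection match the paper's treatment exactly.
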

%====================================================================

\begin{proof}
 Since
$\bQ$ is a composite function of $a$, $b$, $c$ which enter this functional
 through the matrices $\cA$,
$\cB$, $\cC$ of the closed-loop system $F$, the chain rule yields
\begin{equation}
\label{dLdgamma}
    \d_{\gamma} \bQ
    =
    (
        \d_{\gamma} \Gamma
    )^{\dagger}
    (
        \d_{\Gamma} \bQ
    )
    =
    \bPi_{n,p_2,m_2}
    (
        \Gamma_1^{\rT}
        \d_{\Gamma} \bQ
        \Gamma_2^{\rT}
    ).
\end{equation}
Here, $(\cdot)^{\dagger}$ denotes the adjoint of a linear operator in the
sense of the Frobenius inner product of matrices, and
$\bPi_{n,p_2,m_2}$ is the orthogonal projection onto the subspace
$\bGamma_{n,p_2,m_2}$ defined by (\ref{bGamma})--(\ref{bPi}).
Indeed, the first variation of the affine map $\Gamma$, defined by
(\ref{ABC}), is $\delta \Gamma = \Gamma_1(\delta \gamma) \Gamma_2$.
Hence,
$
    \delta \bQ
    =
    \Tr
    (
        \d_{\Gamma} \bQ
        \delta\Gamma^{\rT}
    )
    =
    \Tr
    (
        \d_{\Gamma} \bQ
        (\Gamma_1(\delta\gamma) \Gamma_2)^{\rT}
    )
    =
    \Tr
    (
         \Gamma_1^{\rT}
        \d_{\Gamma} \bQ
        \Gamma_2^{\rT}
        \delta\gamma^{\rT}
    )
    =
    \Tr
    (
        \bPi_{n,p_2,m_2}
        (
             \Gamma_1^{\rT}
            \d_{\Gamma} \bQ
            \Gamma_2^{\rT}
        )
        \delta\gamma^{\rT}
    )
$,
which establishes (\ref{dLdgamma}). Substitution of the matrices
$\Gamma_1$ and $\Gamma_2$ from (\ref{Gamma012}) into the right-hand
side of (\ref{dLdgamma}) yields
\begin{align}
\nonumber
    \d_{\gamma} \bQ
%     =
%        \begin{bmatrix}
%             \d_a \bQ  & \d_b \bQ  \\
%             \d_c \bQ  & 0
%        \end{bmatrix}\\
    &=
    \bPi_{n,p_2,m_2}
    \left(
                 \begin{bmatrix}
                     I_n & 0 & 0\\
                     0 & B_2^{\rT} & D_{12}^{\rT}\\
                 \end{bmatrix}\!\!
                 \begin{bmatrix}
                      \d_{\cA} \bQ  & \d_{\cB} \bQ  \\
                      \d_{\cC} \bQ  & 0
                 \end{bmatrix}\!\!
                 \begin{bmatrix}
                     I_n & 0\\
                     0 & C_2^{\rT}\\
                     0 & D_{21}^{\rT}
                 \end{bmatrix}\!
    \right)\\
 \label{dLdabc}
    & =
        \begin{bmatrix}
            (\d_{\cA}\bQ )_{11} & (\d_{\cA} \bQ )_{12}C_2^{\rT} + (\d_{\cB} \bQ )_1 D_{21}^{\rT}\\
            B_2^{\rT}(\d_{\cA} \bQ )_{21} +  D_{12}^{\rT}(\d_{\cC} \bQ )_1 & 0
        \end{bmatrix}.
\end{align}
Here, in view of (\ref{dLdGamma}),
\begin{equation}
\label{dLdB1_dLdC1}
    \d_{\cA}\bQ
    = 2 R,
    \qquad
    (\d_{\cB} \bQ )_1
    =
    2\Omega_{1\bullet} \cB,
    \qquad
    (\d_{\cC} \bQ )_1
    =
    2\cC\Ups_{\bullet 1},
\end{equation}
 and the block numbering (\ref{blocks}) is used. The assertion
(\ref{dLdabc0}) of the lemma now follows from (\ref{dLdabc}) and
(\ref{dLdB1_dLdC1}).
\end{proof}
%====================================================================

 Necessary conditions for  optimality in the class of controllers (\ref{K}) for the LQQG problem (\ref{LQQG}) are now obtained by equating the blocks of the matrix $\d_{\gamma}\bQ$ in (\ref{dLdabc0})  to zero:
\begin{eqnarray}
%\label{full:cA}
%    \cA
%    & := &
%        \begin{bmatrix}
%                  a &  bC_2\\
%                  B_2 c & A
%        \end{bmatrix},\\
%\label{full:cB}
%    \cB
%    & := &
%        \begin{bmatrix}
%                  bD_{21}\\
%                  B_1
%        \end{bmatrix},\\
%\label{full:cC}
%    \cC
%    & := &
%        \begin{bmatrix}
%                  D_{12} c & C_1
%        \end{bmatrix},\\
%\nonumber\\
%\label{full:cP}
%    \cA P
%    +
%     P \cA^{\rT}
%    +
%    \cB\cB^{\rT}
%         & = & 0,\\
%\label{full:cQ}
%    \cA^{\rT} Q
%    +
%     Q  \cA
%    +
%    \cC^{\rT}\cC
%        & = & 0,\\
%\nonumber\\
%\label{full:Phit}
%    \cA\Phi + \Phi\cA^{\rT} +  P \cC^{\rT}\cC  P
%     & = &
%    0,\\
%\label{full:Psit}
%    \cA^{\rT}\Psi
%    +
%    \Psi\cA +  Q \cB\cB^{\rT}  Q
%     & = &  0,\\
%\nonumber\\
%\label{full:Ups}
%    \Ups
%      & := &
%     P  + \theta ( P  H  + \Phi),\\
%\label{full:Omega}
%    \Omega
%      & := &
%     Q  + \theta ( H  Q  + \Psi),\\
%%\nonumber\\
%\label{full:cR}
%     R  & := & H + \theta(H^2+Q  \Phi + \Psi P),\\
%\nonumber\\
\label{full:dLda}
             R _{11} = 0,\\
\label{full:dLdb}
     R _{12}C_2^{\rT} + \Omega_{1\bullet} \cB D_{21}^{\rT} = 0,\\
\label{full:dLdc}
            B_2^{\rT} R _{21} +  D_{12}^{\rT}\cC \Ups_{\bullet 1} =
            0.
\end{eqnarray}
%The transformation
%\begin{equation}
%\label{abc_transform}
%    (a,b,c)\mapsto (\sigma a\sigma^{-1},\sigma b,c\sigma^{-1}),
%\end{equation}
%with nonsingular matrices $\sigma$ of order $n$, yielding equivalent
%state-space realizations of the controller, induces the following
%symmetry group for the equations (\ref{full:cA})--(\ref{full:dLdc}):
%\begin{eqnarray}
%    (\cA,\cB,\cC)
%    & \mapsto &
%    (\Sigma\cA\Sigma^{-1}, \Sigma \cB, \cC\Sigma^{-1}),\\
%     P  &\mapsto &\Sigma P \Sigma^{\rT},\\
%     Q  &\mapsto & \Sigma^{-\rT} Q \Sigma^{-1},\\
%    \Phi &\mapsto &\Sigma\Phi\Sigma^{\rT},\\
%    \Psi &\mapsto & \Sigma^{-\rT}\Psi\Sigma^{-1},\\
%     R  &\mapsto &\Sigma^{-\rT} R \Sigma^{\rT}.
%\end{eqnarray}
%Here, $(\cdot)^{-\rT} := ((\cdot)^{-1})^{\rT}$ is the composition of
%the matrix inverse with the transpose, and
%$$
%    \Sigma
%    :=
%        \begin{bmatrix}
%            \sigma & 0\\
%            0_n & I_n
%        \end{bmatrix}.
%$$

%%%%%%%%%%%%%%%%%%%%%%%%%%%%%%%%%%%%%%%%%%%%%%%%%%%%%%%%%%%%%%%%%%%%%%%%%%%%%%%
\section{Observation-state and state-feedback matrices}
%%%%%%%%%%%%%%%%%%%%%%%%%%%%%%%%%%%%%%%%%%%%%%%%%%%%%%%%%%%%%%%%%%%%%%%%%%%%%%%

%==============================================================================
\begin{lemma}
\label{lem:pos_def_11} Suppose the matrix $D_{21}$ is of full
row rank, and $D_{12}$ is of full column rank. Also, let (\ref{K}) be a stabilizing controller with a minimal state-space realization. Then the top-left blocks of the matrices $ P $, $ Q $ from (\ref{cPQ}) and
$\Ups$, $\Omega$ from (\ref{Ups}), (\ref{Omega}) are all positive definite:
\begin{equation}
\label{pos_def_11}
     P _{11}\succ 0,
    \qquad
     Q _{11}\succ 0,
    \qquad
    \Ups_{11}\succ 0,
    \qquad
    \Omega_{11}\succ 0.
\end{equation}
\end{lemma}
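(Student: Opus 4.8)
The claim is that when $D_{21}$ has full row rank and $D_{12}$ full column rank and the controller realization is minimal and stabilizing, the four top-left $(n\times n)$ blocks $P_{11}, Q_{11}, \Ups_{11}, \Omega_{11}$ are positive definite. I would prove $P_{11}\succ 0$ first, then get $Q_{11}\succ 0$ by a duality argument, and finally deduce $\Ups_{11}\succ 0$ and $\Omega_{11}\succ 0$ by showing that the corrections to $P$ and $Q$ appearing in (\ref{Ups})--(\ref{Omega}) contribute a positive \emph{semi}definite amount to the $(1,1)$ block. The key structural fact is that the $(1,1)$ block of the closed-loop state corresponds to the controller state $\Xi$, and the closed-loop input matrix $\cB = \begin{bmatrix} bD_{21}\\ B_1\end{bmatrix}$ has top block $bD_{21}$, while the closed-loop output matrix $\cC = \begin{bmatrix} D_{12}c & C_1\end{bmatrix}$ feeds the controller state through $D_{12}c$.

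\textbf{Step 1: $P_{11}\succ 0$.} Since $\cA$ is Hurwitz, $P = \int_0^\infty \re^{\cA t}\cB\cB^{\rT}\re^{\cA^{\rT}t}\rd t \succeq 0$, so $P_{11}\succeq 0$; I must rule out a nonzero $v\in\mR^n$ with $v^{\rT}P_{11}v = 0$. Writing $\bar v := \begin{bmatrix} v\\ 0\end{bmatrix}\in\mR^{2n}$, this forces $\cB^{\rT}\re^{\cA^{\rT}t}\bar v = 0$ for all $t\ge 0$, i.e. $\bar v$ is orthogonal to the controllable subspace of $(\cA,\cB)$. The crucial point: $\cB^{\rT}\bar v$ involves $D_{21}^{\rT}b^{\rT}v$ from the top block plus $B_1^{\rT}\cdot 0$; but more usefully, differentiating the orthogonality relation and using the block form of $\cA$ from (\ref{F}) propagates the condition. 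I would instead argue via the PBH/invariant-subspace criterion: $\{\bar v: v^{\rT}P_{11}v=0 \text{ with the obstruction}\}$ would exhibit an unobservable or uncontrollable mode; full row rank of $D_{21}$ ensures the disturbance $W$ genuinely excites the controller state $\Xi$ through $b\,\rd y_t = bC_2x_t\rd t + bD_{21}\rd w_t$, and combined with minimality of $(a,b,c)$ (so $(a,b)$ controllable) one shows the controller-state subspace lies entirely inside the closed-loop controllable subspace. That yields $P_{11}\succ 0$.

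\textbf{Step 2: $Q_{11}\succ 0$ by duality.} The second Lyapunov equation in (\ref{cPQ}) is the first one for the dual system, which has input matrix $\cC^{\rT} = \begin{bmatrix} c^{\rT}D_{12}^{\rT}\\ C_1^{\rT}\end{bmatrix}$ whose top block $c^{\rT}D_{12}^{\rT}$ has full row rank since $D_{12}$ has full column rank and $(a,c)$ is observable (so $(a^{\rT},c^{\rT})$ controllable) by minimality. Applying Step 1 to the dual gives $Q_{11}\succ 0$.

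\textbf{Step 3: $\Ups_{11}, \Omega_{11}\succ 0$.} From (\ref{Ups}), $\Ups = P + \theta(PH + \Phi)$ with $\theta\ge 0$; since $\Phi\succeq 0$ (a controllability Gramian of a Hurwitz system) I would like $PH = PQP\succeq 0$, which is immediate as $PQP = (QP)^{\rT}P^{\ldots}$ — more carefully, $P^{1/2}(P^{1/2}QP^{1/2})P^{1/2}\succeq 0$ when $P\succeq 0$, so $PQP\succeq 0$, hence $\Ups\succeq P$ in the Loewner order. Taking the $(1,1)$ block (which preserves $\succeq$) gives $\Ups_{11}\succeq P_{11}\succ 0$. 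Symmetrically, $\Omega = Q + \theta(HQ+\Psi)$ with $HQ = QPQ\succeq 0$ and $\Psi\succeq 0$, so $\Omega_{11}\succeq Q_{11}\succ 0$. The main obstacle is Step 1: carefully turning "$v^{\rT}P_{11}v=0$" into a contradiction with controllability requires tracking how the block lower-triangular-ish structure of $\cA$ (with $\cA_{11}=a$, $\cA_{12}=bC_2$) together with the rank condition on $D_{21}$ forces every controller-state direction to be reachable — this is where minimality of $(a,b,c)$ and full rank of $D_{21}$ are both genuinely used, and it is easy to get the bookkeeping wrong; the duality reduction for $Q_{11}$ and the Loewner-order argument for $\Ups_{11},\Omega_{11}$ are then routine.
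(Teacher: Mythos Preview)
Your overall architecture matches the paper exactly: reduce $\Ups_{11}\succ 0$ and $\Omega_{11}\succ 0$ to $P_{11}\succ 0$ and $Q_{11}\succ 0$ via the Loewner inequality $\Ups\succcurlyeq P$, $\Omega\succcurlyeq Q$ (using $PQP,\,QPQ,\,\Phi,\,\Psi\succcurlyeq 0$), prove $P_{11}\succ 0$ directly, and obtain $Q_{11}\succ 0$ by duality. Steps~2 and~3 are essentially verbatim what the paper does.

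The one genuine methodological difference is in Step~1. The paper argues in the frequency domain: it writes $P_{11}=\cov(\xi_t)=\frac{1}{2\pi}\int g(\omega)\Lambda(\omega)g(\omega)^*\,\rd\omega$ with $g(\omega)=(i\omega I_n-a)^{-1}b$ and $\Lambda$ the spectral density of $Y$; full row rank of $D_{21}$ forces $\Lambda(\omega)\succ 0$ for $|\omega|$ large, so $v^{\rT}P_{11}v=0$ gives $v^{\rT}g(\omega)=0$ on a high-frequency set, and reading off the Laurent coefficients $v^{\rT}a^{k-1}b$ contradicts controllability of $(a,b)$. Your time-domain route is the Fourier dual of this and works cleanly once you track the blocks: from $\bar v^{\rT}P\bar v=0$ with $\bar v=(v,0)$ one gets $\cB^{\rT}(\cA^{\rT})^k\bar v=0$ for all $k$; at $k=0$ the full row rank of $D_{21}$ yields $b^{\rT}v=0$, and then $\cA^{\rT}\bar v=(a^{\rT}v,\,C_2^{\rT}b^{\rT}v)=(a^{\rT}v,0)$ lets you iterate to $b^{\rT}(a^{\rT})^kv=0$, contradicting controllability of $(a,b)$. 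Both arguments hinge on the same two ingredients (rank of $D_{21}$ to strip it off, controllability of $(a,b)$ to finish), so there is no substantive gap in your plan; the paper's spectral formulation just packages the induction as a Laurent expansion.
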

%==============================================================================

\begin{proof}
Since $\theta\>0$,
and the matrices $ PH=PQ  P$, $HQ=Q  P Q$, associated with the Gramians $P$, $Q$, and the Schattenians $\Phi$,
$\Psi$ from (\ref{Phi_Psi}) are all positive semi-definite, then (\ref{Ups})
and (\ref{Omega}) imply that $
    \Ups\succcurlyeq  P $
    and
$
    \Omega\succcurlyeq  Q
$. Hence, the same ordering holds for the top-left blocks of these matrices:
$    \Ups_{11}\succcurlyeq  P_{11}$ and $
    \Omega_{11}\succcurlyeq  Q_{11}$.
Therefore, the last two relations in
(\ref{pos_def_11}) will follow from the first two. We will now
prove that $ P _{11}\succ 0$ under the assumptions that
$D_{21}$ is of full row rank and $(a,b)$ is controllable.
Indeed, $ P _{11}$ is the covariance matrix of the controller state:
\begin{equation}
\label{P11}
     P _{11} = \cov(\xi_t)
     =
     \frac{1}{2\pi}
     \int_{-\infty}^{+\infty}
     g(\omega)\Lambda(\omega)g(\omega)^*
     \rd \omega,
     \qquad
     g(\omega)
     :=
     (i\omega I_n-a)^{-1}b,
\end{equation}
where $
\Lambda(\omega):=h(\omega)h(\omega)^*$
 is the spectral density
associated with the observation signal $Y$ from (\ref{y}), with
$h(\omega)
    :=
    D_{21}
    +
    \begin{bmatrix}
        0 & C_2
    \end{bmatrix}
    (i\omega I_{2n}-\cA)^{-1}\cB
$.
From $\lim_{\omega\to \infty} \Lambda(\omega) = D_{21}D_{21}^{\rT}$, it follows that if $D_{21}$ is of  full row rank, then $\Lambda(\omega)\succ 0$ for all sufficiently  large $\omega$, say $|\omega|> \omega_0$. Now, if $P_{11}$ is singular, then $v^{\rT}P_{11} v = 0$ for some nonzero $v\in \mR^n$. In this case, (\ref{P11}) yields
$
    0=
    v^{\rT}
    P_{11} v
    \>
    (2\pi)^{-1}
    \int_{|\omega|>\omega_0}
    \|g(\omega)^* v\|_{\Lambda(\omega)}^2
    \rd \omega
$,
which, in view of $\Lambda(\omega)\succ 0$ over the high frequency range,   implies that $v^{\rT}g(\omega) = 0$ for all $|\omega|>\omega_0$. Hence, by considering the first $n$ terms of the Laurent series $v^{\rT} g(\omega) = \sum_{k=1}^{+\infty} v^{\rT} a^{k-1} b/(i\omega)^k$ at infinity \cite[Lemma~2.3 on pp. 16--17]{Kimura_1997}, it follows that the rank of the matrix $\left[b \mid  \ldots \mid a^{n-1}b\right]$ is less than $n$, and the pair $(a,b)$ is not controllable. Thus, the full row rank of $D_{21}$ and the controllability of $(a,b)$ indeed ensure $P_{11}\succ 0$.  By duality, a similar
reasoning shows that the observability of $(a,c)$ and the full
column rank condition on $D_{12}$ imply $ Q _{11}\succ 0$.
\end{proof}

%==============================================================================

%==============================================================================
\begin{theorem}
\label{th:bc} Suppose the matrix $D_{21}$ is of full row rank,
and $D_{12}$ is of full column rank. Then the matrices $b$ and $c$ of an optimal controller (\ref{K}) in the LQQG problem (\ref{LQQG}) with a minimal state-space
realization satisfy
\begin{eqnarray}
\label{b}
    b
    =
    -\Omega_{11}^{-1}
    (
         R _{12}C_2^{\rT}
        +
        \Omega_{12} B_1 D_{21}^{\rT}
    )
    (
        D_{21}
        D_{21}^{\rT}
    )^{-1},\\
\label{c}
    c
    =
    -
    (
        D_{12}^{\rT}
        D_{12}
    )^{-1}
    (
        B_2^{\rT}
         R _{21}
        +
        D_{12}^{\rT}C_1\Ups_{21}
    )
    \Ups_{11}^{-1},
\end{eqnarray}
where the matrices $\Ups$, $\Omega$, $R$ are defined by (\ref{Ups})--(\ref{R}).
\end{theorem}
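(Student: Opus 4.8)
The plan is to read the two formulas off directly from the first-order optimality conditions (\ref{full:dLdb}) and (\ref{full:dLdc}), treating them as linear matrix equations in $b$ and $c$ respectively. The only inputs beyond these conditions are the explicit block form of the closed-loop matrices $\cB$ and $\cC$ in (\ref{F}), the positive definiteness $\Omega_{11}\succ 0$ and $\Ups_{11}\succ 0$ supplied by Lemma~\ref{lem:pos_def_11}, and the invertibility of $D_{21}D_{21}^{\rT}$ and $D_{12}^{\rT}D_{12}$, which follows from the full-rank hypotheses on $D_{21}$ and $D_{12}$.

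For $b$, I would use $\cB = \bigl[\begin{smallmatrix} bD_{21}\\ B_1\end{smallmatrix}\bigr]$ together with the block numbering (\ref{blocks}) to write $\Omega_{1\bullet}\cB = \Omega_{11}bD_{21} + \Omega_{12}B_1$, so that the condition (\ref{full:dLdb}) becomes
$$
    R_{12}C_2^{\rT} + \Omega_{11}bD_{21}D_{21}^{\rT} + \Omega_{12}B_1 D_{21}^{\rT} = 0 .
$$
Since the controller is stabilizing and minimal, Lemma~\ref{lem:pos_def_11} gives $\Omega_{11}\succ 0$, while the full row rank of $D_{21}$ makes $D_{21}D_{21}^{\rT}$ positive definite; left-multiplying by $\Omega_{11}^{-1}$ and right-multiplying by $(D_{21}D_{21}^{\rT})^{-1}$ then isolates $b$ and yields (\ref{b}).

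For $c$ the argument is entirely dual: from $\cC = [\,D_{12}c\ \ C_1\,]$ and (\ref{blocks}) one has $\cC\Ups_{\bullet 1} = D_{12}c\Ups_{11} + C_1\Ups_{21}$, so (\ref{full:dLdc}) reads
$$
    B_2^{\rT}R_{21} + D_{12}^{\rT}D_{12}\,c\,\Ups_{11} + D_{12}^{\rT}C_1\Ups_{21} = 0 ,
$$
and the invertibility of $D_{12}^{\rT}D_{12}$ (full column rank of $D_{12}$) and of $\Ups_{11}$ (Lemma~\ref{lem:pos_def_11}) lets us solve for $c$, giving (\ref{c}).

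The whole derivation is routine linear algebra once Lemma~\ref{lem:pos_def_11} is in hand; there is no substantive obstacle, only the bookkeeping of the block partition of $\Omega$, $\Ups$, $R$ against that of $\cA$, $\cB$, $\cC$, and the observation that the remaining condition (\ref{full:dLda}) plays no part here — these necessary conditions pin down $b$ and $c$ explicitly, while the controller dynamics matrix $a$ is left to be determined by $R_{11}=0$ together with the Lyapunov equations (\ref{cPQ}), (\ref{Phi_Psi}).
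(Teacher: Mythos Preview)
Your proposal is correct and matches the paper's own proof essentially line for line: substitute the block form of $\cB$ and $\cC$ from (\ref{F}) into the optimality conditions (\ref{full:dLdb}) and (\ref{full:dLdc}), invoke Lemma~\ref{lem:pos_def_11} for the invertibility of $\Omega_{11}$ and $\Ups_{11}$, and multiply through by the appropriate inverses to isolate $b$ and $c$.
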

%==============================================================================

\begin{proof}
 Substitution of
the matrices $\cB$ and $\cC$ from (\ref{F})  into (\ref{full:dLdb}) and (\ref{full:dLdc}) brings
these equations to the form
\begin{eqnarray}
\label{full:dLdb1}
     R _{12}C_2^{\rT}
    +
    (
        \Omega_{11}bD_{21}
        +
        \Omega_{12}B_1
    )
    D_{21}^{\rT}
    =
    0,\\
\label{full:dLdc1}
    B_2^{\rT} R _{21}
    +
    D_{12}^{\rT}
    (
        D_{12}c \Ups_{11}
        +
        C_1\Ups_{21}
    )
    =
    0.
\end{eqnarray}
By Lemma~\ref{lem:pos_def_11}, the matrices $\Ups_{11}$ and
$\Omega_{11}$ are nonsingular. Therefore, left multiplication of both
sides of (\ref{full:dLdb1}) by $\Omega_{11}^{-1}$ and right
multiplication by $(D_{21} D_{21}^{\rT})^{-1}$ yields (\ref{b}).
Similarly, right multiplication of both sides of (\ref{full:dLdc1})
by $\Ups_{11}^{-1}$ and left multiplication by $(D_{12}^{\rT}
D_{12})^{-1}$ yields (\ref{c}).
\end{proof}
%==============================================================================

Under the assumptions of Theorem~\ref{th:bc}, the modified set of equations for the state-space realization matrices of an optimal controller in the LQQG problem (\ref{LQQG}) is formed by the algebraic Lyapunov equations (\ref{cPQ}), (\ref{Phi_Psi}) and by the algebraic equations  (\ref{full:dLda}), (\ref{b}), (\ref{c}). In the case
 $\theta = 0$, these equations can be shown to yield the two independent Riccati equations for the standard LQG controller.

% These derivatives all vanish at a critical point of
%the Lagrange function. By equating the Frechet derivatives
%(\ref{dLdP}) and (\ref{dLdQ}) to zero, it follows that
%\begin{eqnarray}
%    \Phi
%    & = &
%     Q  + \theta  Q  P  Q  + \wt{\Phi},\\
%    \Psi
%    & = &
%     P  + \theta  P  Q  P  + \wt{\Psi},
%\end{eqnarray}
%where $\wt{\Phi}$ and $\wt{\Psi}$ are positive-semidefinite real
%symmetric  matrices satisfying the Lyapunov equations
%\begin{eqnarray}
%    \cA^{\rT}\wt{\Phi} + \wt{\Phi}\cA + 2\theta  Q \cB\cB^{\rT}  Q
%    & = & 0,\\
%    \cA\wt{\Psi} + \wt{\Psi}\cA^{\rT} + 2\theta  P \cC^{\rT}\cC  P
%    & = &
%    0.
%\end{eqnarray}
%These representations follow from the lemma below which further
%develops the Gramian anticommutation idea of Remark~2.

%M. Konstantinov, V. MehrmannCorresponding P. Petkov, ''On properties
%of Sylvester and Lyapunov operators'', Linear Algebra and its
%Applications Volume 312, Issues 1-3, 15 June 2000, Pages 35-71

%%%%%%%%%%%%%%%%%%%%%%%%%%%%%%%%%%%%%%%%%%%%%%%%%%%%%%%%%%%%%%%%%%%%%%%%%%%%%%%
\section{Homotopy method}
%%%%%%%%%%%%%%%%%%%%%%%%%%%%%%%%%%%%%%%%%%%%%%%%%%%%%%%%%%%%%%%%%%%%%%%%%%%%%%%

With the matrix $\gamma$ from (\ref{ABC}), we associate a linear subspace of $\bGamma_{n,p_2,m_2}$ by
\begin{equation}
\label{mT}
    \mT(\gamma)
    =
    \left\{
        \begin{bmatrix}
            \tau a - a\tau & \tau b\\
            -c\tau & 0
        \end{bmatrix}:\,
        \tau \in \mR^{n\x n}
    \right\}.
\end{equation}
This is the tangent space  generated by the group of transformations $(a,b,c)\mapsto (\sigma a\sigma^{-1}, \sigma b, c \sigma^{-1})$ (where $\sigma \in \mR^{n\x n}$ are arbitrary nonsingular matrices),  which leave the transfer function of the controller (\ref{K}), and hence, the input-output operator of the closed-loop system (\ref{ABC}), unchanged. The matrix $\d_{\gamma}\bQ$, associated with the  controller $K$, belongs to the orthogonal complement $\mT(\gamma)^{\bot}$ of $\mT(\gamma)$ to $\bGamma_{n,p_2,m_2}$ in the sense of the Frobenius inner product. We say that the controller  delivers a strong local minimum to the quadro-quartic functional $\bQ$ in (\ref{LQQG}) if, in addition to the equality $\d_{\gamma}\bQ = 0$, it also makes the second order Frechet derivative
$
    \d_{\gamma}^2
    \bQ
    =
    \d_{\gamma}^2(\|F\|_2^2)
    +
    \theta
    \d_{\gamma}^2(\|F\|_4^4)/2
$
positive definite on the subspace $\mT(\gamma)^{\bot}$. Now, suppose there exists a smooth map $0\< \theta \mapsto \gamma_*(\theta)\in \bGamma_{n,p_2,m_2}$ such that $\gamma_*(\theta)$ is a strong local minimum  of the quadro-quartic functional $\bQ_{\theta}$ of the closed-loop system $F$ in the sense above, so that
$
    \left.\d_{\gamma}\bQ_{\theta}\right|_{\gamma = \gamma_*(\theta)} = 0
$.
By differentiating the last equality with respect to $\theta$, it follows that
\begin{equation}
\label{homo}
    \d_{\gamma}^2 \bQ_{\theta}(\gamma_*')
    +
    \d_{\gamma}(\|F\|_4^4)/2 = 0.
\end{equation}
Here, $\gamma_*'(\theta):= \d_{\theta}\gamma_*(\theta)$ and use is made of the identity $\d_{\theta} \bQ_{\theta} =\|F\|_4^4/2$ which follows from (\ref{QQ}) and, in view of the interchangeability  of the derivatives in $\theta$ and $\gamma$,  implies that $\d_{\theta}\d_{\gamma}\bQ_{\theta} = \d_{\gamma}(\|F\|_4^4)/2\in \mT(\gamma)^{\bot}$. Since the matrix $\gamma_*(\theta)$ is defined up to the orbit of the transformation group, then $\gamma_*'(\theta):= \d_{\theta}\gamma_*(\theta)$ is defined modulo the subspace $\mT(\gamma_*(\theta))$ from (\ref{mT}). Therefore, (\ref{homo}), which is a linear equation with respect to $\gamma_*'(\theta)$, can be restricted to the subspace $\mT(\gamma_*(\theta))^{\bot}$. As long as $\gamma_*(\theta)$ is a strong local minimum of $\bQ_{\theta}$, so that the self-adjoint operator $\d_{\gamma}^2\bQ$ is positive definite (and hence, invertible) on $\mT(\gamma_*(\theta))^{\bot}$, the equation  (\ref{homo}) is equivalent to
\begin{equation}
\label{homo_ODE}
    \gamma_*'(\theta)
    =
    -\bL^{-1}(\d_{\gamma}(\|F\|_4^4))/2,
\end{equation}
where $\bL$ is the restriction of $\d_{\gamma}^2\bQ$
 to the subspace $\mT(\gamma)^{\bot}$. The equation (\ref{homo_ODE}) is an ODE, with $\theta\>0$ playing the role of fictitious time. The initial value $\gamma_*(0)$ is provided by the state-space realization triple of the standard LQG controller. The computation of an LQQG controller for $\theta>0$ can be carried out by numerically integrating the homotopy ODE (\ref{homo_ODE}) initialized at $\gamma_*(0)$. The operator $\bL$ involves Frechet differentiation of solutions of algebraic Lyapunov equations with respect to their  coefficients, and the inverse $\bL^{-1}$ can be computed by using the vectorization of matrices \cite{M_1988}. The state-space formulae of the homotopy algorithm and other details of its implementation will be reported in subsequent publications.

%We also develop a homotopy method which reduces the solution
%of the set of equations to the numerical integration of the ODE
%\begin{equation}
%%\label{homo_ODE}
%    \cL_{\theta,\gamma}(\gamma\,')
%    +
%    \frac{1}{2}
%    \d_{\gamma}(\|F\|_4^4) = 0,
%    \quad
%    \gamma
%    :=
%    \left[
%        \begin{array}{cc}
%            a & b\\
%            c & 0
%        \end{array}
%    \right],
%\end{equation}
%where $\theta$ plays the role of fictitious time (with the time
%derivative $(\cdot)\,'$) and $\d_{\gamma}$ is the Frechet derivative
%with respect to the block matrix $\gamma$. Here, the second order
%Frechet derivative
%$$
%    \cL_{\theta,\gamma}
%    :=
%    \d_{\gamma}^2
%    \bQ_{\theta}(F)
%    =
%    \d_{\gamma}^2(\|F\|_2^2)
%    +
%    \frac{\theta}{2}
%    \d_{\gamma}^2(\|F\|_4^4)
%$$
%is a self-adjoint linear operator on a Hilbert space of real
%matrices with the sparsity pattern as in (\ref{homo_ODE}) and the
%inherited Frobenius inner product. With equivalent state-space
%realizations being identified, $\gamma$ delivers a strong local
%minimum to the LQQG control problem (\ref{LQQG}) if and only if
%$\cL_{\theta,\gamma}$ is essentially positive definite that can be
%monitored in the course of numerically integrating the homotopy ODE
%(\ref{homo_ODE}).

%%%%%%%%%%%%%%%%%%%%%%%%%%%%%%%%%%%%%%%%%%%%%%%%%%%%%%%%%%%%%%%%%%%%%%%%%%%%%%%

%%%%%%%%%%%%%%%%%%%%%%%%%%%%%%%%%%%%%%%%%%%%%%%%%%%%%%%%%%%%%%%%%%%%%%%%%%%%%%%
\appendix
%%%%%%%%%%%%%%%%%%%%%%%%%%%%%%%%%%%%%%%%%%%%%%%%%%%%%%%%%%%%%%%%%%%%%%%%%%%%%%%

%%%%%%%%%%%%%%%%%%%%%%%%%%%%%%%%%%%%%%%%%%%%%%%%%%%%%%%%%%%%%%%%%%%%%%%%%%%%%%%
\subsection{Covariance of squared norms of Gaussian
random  vectors}
%%%%%%%%%%%%%%%%%%%%%%%%%%%%%%%%%%%%%%%%%%%%%%%%%%%%%%%%%%%%%%%%%%%%%%%%%%%%%%%
\renewcommand{\theequation}{A.\arabic{equation}}
\setcounter{equation}{0}
%%%%%%%%%%%%%%%%%%%%%%%%%%%%%%%%%%%%%%%%%%%%%%%%%%%%%%%%%%%%%%%%%%%%%%%%%%%%%%%

%==============================================================================
\begin{lemma}
\label{lem:bilinear} Let $\xi$ and $\eta$ be jointly Gaussian random
vectors with zero mean. Then the covariance of their squared
Euclidean norms is expressed in terms of the Frobenius norm of their
cross-covariance matrix by
\begin{equation}
\label{bilinear}
    \cov
    (
        |\xi|^2,
        |\eta|^2
    )
    =
    2
    \|
        \cov(\xi,\eta)
    \|^2.
\end{equation}
\end{lemma}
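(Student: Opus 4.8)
The plan is to reduce the identity to the classical Gaussian moment formula (Isserlis'/Wick's theorem) \cite{Isserlis_1918}. First I would expand the squared Euclidean norms in coordinates, writing $|\xi|^2 = \sum_i \xi_i^2$ and $|\eta|^2 = \sum_j \eta_j^2$, so that bilinearity of the covariance gives
\begin{equation*}
    \cov(|\xi|^2, |\eta|^2)
    =
    \sum_{i,j}
    \cov(\xi_i^2, \eta_j^2)
    =
    \sum_{i,j}
    \left(
        \bE(\xi_i^2 \eta_j^2)
        -
        \bE(\xi_i^2)\,\bE(\eta_j^2)
    \right).
\end{equation*}

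Next, since $(\xi,\eta)$ is jointly Gaussian with zero mean, every scalar pair $(\xi_i,\eta_j)$ is jointly Gaussian with zero mean, and I would apply the fourth-order Isserlis formula $\bE(abcd) = \bE(ab)\bE(cd) + \bE(ac)\bE(bd) + \bE(ad)\bE(bc)$ with $a=b=\xi_i$ and $c=d=\eta_j$. The coincidences $a=b$ and $c=d$ collapse the three pairings into one ``diagonal'' term plus a doubled ``cross'' term, yielding $\bE(\xi_i^2\eta_j^2) = \bE(\xi_i^2)\bE(\eta_j^2) + 2\bE(\xi_i\eta_j)^2$, and hence $\cov(\xi_i^2,\eta_j^2) = 2\bE(\xi_i\eta_j)^2 = 2(\cov(\xi,\eta))_{ij}^2$.

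Finally I would sum over $i$ and $j$, recognizing the resulting double sum of squared entries as the squared Frobenius norm of the cross-covariance matrix:
\begin{equation*}
    \cov(|\xi|^2, |\eta|^2)
    =
    2\sum_{i,j} (\cov(\xi,\eta))_{ij}^2
    =
    2\|\cov(\xi,\eta)\|^2,
\end{equation*}
which is (\ref{bilinear}). There is essentially no obstacle here; the only point needing a little care is the bookkeeping of the three Wick pairings and their reduction under the repeated indices, and one may equally well invoke the known expression for the covariance of two quadratic forms in a jointly Gaussian vector \cite{Magnus_1978} applied to the quadratic forms $|\xi|^2$ and $|\eta|^2$ viewed as forms in the stacked vector $(\xi^{\rT},\eta^{\rT})^{\rT}$.
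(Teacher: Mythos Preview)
Your proof is correct and follows essentially the same route as the paper: both apply Isserlis' formula to the scalar products $\bE(\xi_i^2\eta_j^2)$, obtain $\bE(\xi_i^2)\bE(\eta_j^2)+2(\cov(\xi_i,\eta_j))^2$, and then sum over $i,j$ to recognise the squared Frobenius norm of $\cov(\xi,\eta)$. The paper likewise notes the alternative via \cite{Magnus_1978}.
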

%==============================================================================

\begin{proof}
By applying the representation \cite{Isserlis_1918} for the mixed
 moments of Gaussian random
variables in terms of their covariances to the entries of the
vectors $\xi$ and $\eta$, it follows that
$    \bE(\xi_i^2\eta_j^2)
    =
%    \bE(
%            \xi_i
%            \xi_i
%            \eta_j
%            \eta_j
%    )
%    =
    \bE
    (
        \xi_i\xi_i
    )
    \bE
    (
        \eta_j\eta_j
    )
     +
    \bE
    (
        \xi_i\eta_j
    )
    \bE
    (
        \xi_i \eta_j
    )
     +
    \bE
    (
        \xi_i\eta_j
    )
    \bE
    (
        \xi_i\eta_j
    )
    =
    \bE(\xi_i^2)
    \bE(\eta_j^2)
    +
    2
    (
        \cov
        (
            \xi_i,
            \eta_j
        )
    )^2.
$
%$    \bE(\xi_i^2\eta_j^2)
%    =
%    \bE(
%        \underbrace
%        {
%            \xi_i
%            \xi_i
%            \eta_j
%            \eta_j
%        }_{\tiny \!1 \, \ \ 2\,\ \ 3\,\ \ 4}
%    )
%    =
%    \bE
%    (
%        \underbrace{\xi_i\xi_i}_{\tiny 1\ \ 2}
%    )
%    \bE
%    (
%        \underbrace{\eta_j\eta_j}_{\tiny 3\ \ 4}
%    )
%     +
%    \bE
%    (
%        \underbrace{\xi_i\eta_j}_{\tiny 1\ \ 3}
%    )
%    \bE
%    (
%        \underbrace{\xi_i \eta_j}_{\tiny 2\ \ 4}
%    )
%     +
%    \bE
%    (
%        \underbrace{\xi_i\eta_j}_{1\ \ 4}
%    )
%    \bE
%    (
%        \underbrace{\xi_i\eta_j}_{2\ \ 3}
%    )
%    =
%    \bE(\xi_i^2)
%    \bE(\eta_j^2)
%    +
%    2
%    (
%        \cov
%        (
%            \xi_i,
%            \eta_j
%        )
%    )^2.
%$
%\vskip-8mm\begin{multline*}
%    \bE(\xi_i^2\eta_j^2)
%    =
%    \bE(
%        \overbrace
%        {
%            \xi_i
%            \xi_i
%            \eta_j
%            \eta_j
%        }^{\tiny \!1 \, \ \ 2\,\ \ 3\,\ \ 4}
%    )\\
%    =
%    \bE
%    (
%        \underbrace{\xi_i\xi_i}_{\tiny 1\ \ 2}
%    )
%    \bE
%    (
%        \underbrace{\eta_j\eta_j}_{\tiny 3\ \ 4}
%    )
%     +
%    \bE
%    (
%        \underbrace{\xi_i\eta_j}_{\tiny 1\ \ 3}
%    )
%    \bE
%    (
%        \underbrace{\xi_i \eta_j}_{\tiny 2\ \ 4}
%    )
%     +
%    \bE
%    (
%        \underbrace{\xi_i\eta_j}_{1\ \ 4}
%    )
%    \bE
%    (
%        \underbrace{\xi_i\eta_j}_{2\ \ 3}
%    )\\
%    =
%    \bE(\xi_i^2)
%    \bE(\eta_j^2)
%    +
%    2
%    (
%        \cov
%        (
%            \xi_i,
%            \eta_j
%        )
%    )^2.
%\end{multline*}\vskip-2mm\noindent
Therefore,
\begin{equation}
\label{bilinear1}
    \bE(
        |\xi|^2
        |\eta|^2
    )
     =
    \sum_{i,j}
    \bE(\xi_i^2 \eta_j^2)
     =
    \bE(|\xi|^2) \bE(|\eta|^2)
    +
    2
    \sum_{i,j}
    (
        \cov
        (
            \xi_i,
            \eta_j
        )
    )^2,
\end{equation}
where the rightmost sum is $\|\cov(\xi,\eta)\|^2$. The relation (\ref{bilinear}) is now obtained by
substituting (\ref{bilinear1}) into
$\cov(|\xi|^2,|\eta|^2) := \bE(|\xi|^2|\eta|^2) -
\bE(|\xi|^2)\bE(|\eta|^2)$. Note that (\ref{bilinear}) can also be established by using \cite[Lemma~6.2]{Magnus_1978}.
\end{proof}
%==============================================================================

%%%%%%%%%%%%%%%%%%%%%%%%%%%%%%%%%%%%%%%%%%%%%%%%%%%%%%%%%%%%%%%%%%%%%%%%%%%%%%%
\subsection{State space formula for Frechet derivative of $\cH_2$-norm}
%%%%%%%%%%%%%%%%%%%%%%%%%%%%%%%%%%%%%%%%%%%%%%%%%%%%%%%%%%%%%%%%%%%%%%%%%%%%%%%
\renewcommand{\theequation}{B.\arabic{equation}}
\setcounter{equation}{0}
%%%%%%%%%%%%%%%%%%%%%%%%%%%%%%%%%%%%%%%%%%%%%%%%%%%%%%%%%%%%%%%%%%%%%%%%%%%%%%%

%==============================================================================
\begin{lemma}
\label{lem:diffH2}
The Frechet derivative of the squared $\cH_2$-norm $E:= \|F\|_2^2$ of the system (\ref{FABC}), with $A$ Hurwitz, is computed as
\begin{equation}
\label{dEdGamma}
    \d_{\Gamma} E
    =
    2
    \begin{bmatrix}
        H & Q B\\
        C P & 0
    \end{bmatrix},
    \qquad
    \Gamma:=
    \begin{bmatrix}
        A & B\\
        C  & 0
    \end{bmatrix}.
\end{equation}
Here, the matrix $H$ is associated by (\ref{H}) with the Gramians $P$, $Q$ from
 (\ref{PQ}).
\end{lemma}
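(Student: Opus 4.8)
The plan is to express the squared $\cH_2$-norm $E = \|F\|_2^2$ explicitly as a trace involving the realization matrices $A$, $B$, $C$ and the Gramians, then differentiate, taking into account that the Gramians themselves depend on $A$, $B$, $C$. First I would recall from (\ref{H2}) and (\ref{ct})-type computations that $E = \Tr(CPC^{\rT}) = \Tr(B^{\rT}QB)$, where $P$ and $Q$ solve the Lyapunov equations (\ref{PQ}). The subtlety is that varying $A$ or $B$ changes $P$, and varying $A$ or $C$ changes $Q$, so a naive differentiation of $\Tr(CPC^{\rT})$ only in the explicit $C$'s is wrong; one must handle the implicit dependence.

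The standard device I would use is the adjoint (Lagrangian) trick: since $E = \Tr(CPC^{\rT})$ with $P$ constrained by $AP + PA^{\rT} + BB^{\rT} = 0$, I introduce the multiplier $Q$ (which turns out to be exactly the observability Gramian) and write $E = \Tr(CPC^{\rT}) + \Tr\big(Q(AP + PA^{\rT} + BB^{\rT})\big)$, the added term vanishing on the constraint. Taking the first variation $\delta E$, the terms proportional to $\delta P$ collect to $\Tr\big((C^{\rT}C + A^{\rT}Q + QA)\,\delta P\big)$, which is zero precisely because $Q$ solves the second equation in (\ref{PQ}); this is what eliminates the implicit $\delta P$ contribution. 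What remains is $\delta E = \Tr(QA\,\delta P\ \text{-free terms}) = 2\Tr(QP\,\delta A^{\rT}) + 2\Tr(QB\,\delta B^{\rT}) + 2\Tr(CP\,\delta C^{\rT})$, where I have used symmetry of $P$ and $Q$ and the cyclic property of the trace, together with $\delta(BB^{\rT}) = (\delta B)B^{\rT} + B(\delta B)^{\rT}$. Reading off the coefficients and recognizing $QP = H$ gives exactly the block matrix in (\ref{dEdGamma}).

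The main obstacle — really the only place care is needed — is the bookkeeping of the implicit variation of $P$: one must verify that the coefficient of $\delta P$ in $\delta E$ is $C^{\rT}C + A^{\rT}Q + QA$ (after using cyclicity to move everything onto $\delta P$), and that this is the left-hand side of the second Lyapunov equation in (\ref{PQ}), hence zero. A symmetric route, differentiating $E = \Tr(B^{\rT}QB)$ with $P$ as the multiplier, gives the same answer and serves as a cross-check; this dual viewpoint also explains why the off-diagonal blocks come out as $QB$ and $CP$ rather than, say, both in terms of one Gramian. Finally I would note that $A$ being Hurwitz guarantees unique positive semidefinite solutions $P$, $Q$ and that the solution maps are smooth in the coefficients, so the Frechet derivative is well defined; the formula (\ref{dEdGamma}) then follows by identifying $\d_\Gamma E$ with the representing matrix of the linear functional $\delta\Gamma \mapsto \delta E$ in the Frobenius inner product.
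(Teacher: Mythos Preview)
Your proposal is correct. The paper's proof computes the three blocks $\d_A E$, $\d_B E$, $\d_C E$ separately: for $\d_B E$ it fixes $A,C$ (so $Q$ is constant) and differentiates $\Tr(QBB^{\rT})$ directly; for $\d_C E$ it fixes $A,B$ (so $P$ is constant) and differentiates $\Tr(CPC^{\rT})$; for $\d_A E$ it writes $\delta E = \Tr(C^{\rT}C\,\delta P)$, substitutes $C^{\rT}C = -(A^{\rT}Q+QA)$ from the second Lyapunov equation, and then uses the first variation of the $P$-equation to eliminate $\delta P$. Your Lagrangian/adjoint formulation packages all three computations at once: adding $\Tr\big(Q(AP+PA^{\rT}+BB^{\rT})\big)$ and choosing $Q$ as the observability Gramian to annihilate the $\delta P$ coefficient is algebraically identical to the paper's substitution step for $\d_A E$, but your route is more unified and makes transparent why the dual Gramian appears as the multiplier. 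The paper's block-by-block treatment, on the other hand, makes the individual formulas $\d_B E = 2QB$ and $\d_C E = 2CP$ fall out with no bookkeeping at all, since in each case the relevant Gramian is genuinely constant.
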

%==============================================================================

\begin{proof}
The Frechet derivative $\d_{\Gamma} E $ inherits the block structure of the matrix $\Gamma$:
\begin{equation}
\label{dEdGamma_blocks}
    \d_{\Gamma} E
    =
    \begin{bmatrix}
        \d_A E  & \d_B E \\
        \d_C E  & 0
    \end{bmatrix}.
\end{equation}
We will now compute the blocks of this matrix.  To calculate $\d_A E $, let $B$ and $C$ be fixed. Then the first variation of $ E $ with respect to $A$ is
$
    \delta  E
    =
    \Tr(C^{\rT}C\delta P)
    =
    -\Tr((A^{\rT}Q+QA)\delta P)
    =
    -\Tr(Q(A\delta P +(\delta P)A^{\rT}))
    =
    \Tr(Q((\delta A) P +P\delta A^{\rT}))
    =
    2\Tr(H\delta A^{\rT})
$,
which implies that
\begin{equation}
\label{dEdA}
    \d_A E  = 2H.
\end{equation}
Here, use has also been made of the first variation  of the Lyapunov equation for $P$ with constant $B$ which yields
$        A\delta P + (\delta P) A^{\rT} + (\delta A)P + P\delta A^{\rT} = 0
$.
To compute $\d_B E $, we fix $A$ and $C$. Then the observability Gramian $Q$, which is a function of $A$ and $C$, is also constant, and  the first variation of $ E $ with respect to $B$ is
$
    \delta  E
    =
    \Tr(Q\delta(BB^{\rT}))
    =
    \Tr(Q((\delta B)B^{\rT}+B\delta B^{\rT}))
    =
    2\Tr(Q B\delta B^{\rT})
$,
and hence,
\begin{equation}
\label{dEdB}
    \d_B E  = 2QB.
\end{equation}
The derivative $\d_C E $ is calculated by a similar reasoning. Assuming $A$ and $B$ (and so also the controllability Gramian $P$) to be fixed, the first variation of $ E $ with respect to $C$ is
$
    \delta  E
    =
    \Tr(P\delta(C^{\rT}C))
    =
    \Tr(P((\delta C^{\rT})C+C^{\rT}\delta C))
    =
    2\Tr(C P\delta C^{\rT}),
$
which implies that
\begin{equation}
\label{dEdC}
    \d_C E  = 2C P.
\end{equation}
Substitution of (\ref{dEdA})--(\ref{dEdC}) into (\ref{dEdGamma_blocks}) yields (\ref{dEdGamma}).
\end{proof}

\subsection{Frechet differentiation of quartic norm in state space}
%%%%%%%%%%%%%%%%%%%%%%%%%%%%%%%%%%%%%%%%%%%%%%%%%%%%%%%%%%%%%%%%%%%%%%%%%%%%%%%
\renewcommand{\theequation}{C.\arabic{equation}}
\setcounter{equation}{0}
%%%%%%%%%%%%%%%%%%%%%%%%%%%%%%%%%%%%%%%%%%%%%%%%%%%%%%%%%%%%%%%%%%%%%%%%%%%%%%%

%%==============================================================================
%The matrix $H$ in (\ref{H})  satisfies an
%algebraic Sylvester equation
%\begin{equation}
%\label{Sylvester}
%    HA^{\rT} - A^{\rT} H + QBB^{\rT} - C^{\rT}C P = 0.
%\end{equation}
%This is obtained by left-multiplying the first Lyapunov equation in
%(\ref{PQ}) by $Q$ and right-multiplying the second Lyapunov
%equation by $P$, then taking the difference and recalling (\ref{H}):
%\begin{eqnarray*}
%\nonumber
%    0
%    =
%    Q
%    (
%        A P
%        +
%        PA^{\rT}
%        +
%        BB^{\rT}
%    )
%    -
%    (
%        A^{\rT}Q
%        +
%        QA
%        +
%        C^{\rT}C
%    )
%    P\\
%    =
%    H A^{\rT}
%    -
%    A^{\rT} H
%    +
%    QBB^{\rT}
%    -
%    C^{\rT}C P.
%\end{eqnarray*}
%%The equation (\ref{Sylvester}) specifies the matrix $H$ up to the
%%commutant of the matrix $A^{\rT}$.
%%==============================================================================

%==============================================================================
\begin{lemma}
\label{lem:diffH4} The Frechet derivative of the fourth power $N:= \|F\|_4^4$ of the quartic  norm of the system (\ref{FABC}), with $A$ Hurwitz, is computed as
\begin{equation}
\label{dNdGamma}
    \d_{\Gamma} N
    =
    4
    \left[
    \begin{array}{cc}
        H^2 + Q\Phi + \Psi P & (HQ + \Psi)B\\
        C(PH + \Phi) & 0
    \end{array}
    \right].
\end{equation}
Here, the matrix $H$ is associated by (\ref{H}) with the Gramians $P$, $Q$ from
 (\ref{PQ}), and $\Phi$, $\Psi$ are the Schattenians from (\ref{Schattenian_Phi_Psi}).
\end{lemma}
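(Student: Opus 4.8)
The plan is to differentiate the state-space expression $N = 2\Tr(CPQPC^{\rT})$ furnished by Lemma~\ref{lem:H4state}, following the template of the proof of Lemma~\ref{lem:diffH2}. Since $N$ depends on $A$, $B$, $C$ both explicitly and through the Gramians $P$, $Q$ governed by (\ref{PQ}), I will compute the first variation $\delta N$ under perturbations of $A$, $B$, $C$ in turn and discharge the resulting $\delta P$- and $\delta Q$-contributions by introducing auxiliary (adjoint) Lyapunov equations, whose solutions will be identified with the Schattenians $\Phi$, $\Psi$ of (\ref{Schattenian_Phi_Psi}), possibly combined with the matrix $H$ from (\ref{H}). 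As in (\ref{dEdGamma_blocks}), the derivative $\d_{\Gamma} N$ inherits the block pattern of $\Gamma$, so it suffices to produce $\d_A N$, $\d_B N$, $\d_C N$. Note that the matrices multiplying $\delta P$ and $\delta Q$ below, namely $QPC^{\rT}C + C^{\rT}CPQ$ and $PC^{\rT}CP$, are symmetric, which is what makes the adjoint substitution clean.

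For $\d_C N$ I fix $A$, $B$ (hence $P$), so that only $Q$ varies with $C$; the explicit $\delta C$-terms contribute $4CPQP = 4CPH$, while the term $2\Tr\big(PC^{\rT}CP\,\delta Q\big)$ is handled using the variation $A^{\rT}\delta Q + (\delta Q)A + (\delta C)^{\rT}C + C^{\rT}\delta C = 0$ of the observability Lyapunov equation together with the adjoint relation $A\Xi + \Xi A^{\rT} + PC^{\rT}CP = 0$; by inspection this is precisely the defining equation of the controllability Schattenian, so $\Xi = \Phi$ and the term equals $4\Tr(C\Phi\,\delta C^{\rT})$, whence $\d_C N = 4C(PH+\Phi)$. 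For $\d_B N$ I fix $A$, $C$ (hence $Q$), so only $P$ varies, and $\delta N = 2\Tr\big((QPC^{\rT}C + C^{\rT}CPQ)\,\delta P\big)$; the variation of the controllability Lyapunov equation together with the adjoint $A^{\rT}\Xi + \Xi A + QPC^{\rT}C + C^{\rT}CPQ = 0$ gives $\d_B N = 4\Xi B$, and the job is to identify $\Xi$: substituting $C^{\rT}C = -(A^{\rT}Q + QA)$ and then $AP = -(PA^{\rT} + BB^{\rT})$ into the right-hand side collapses it, after invoking the $\Psi$-equation $A^{\rT}\Psi + \Psi A + QBB^{\rT}Q = 0$, to $A^{\rT}(HQ+\Psi) + (HQ+\Psi)A$; since $HQ = QPQ$ and $\Psi$ are symmetric, uniqueness of the solution of a Lyapunov equation with Hurwitz $A$ forces $\Xi = HQ+\Psi$, so $\d_B N = 4(HQ+\Psi)B$.

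Finally, for $\d_A N$ both $P$ and $Q$ vary, and $\delta N = 2\Tr\big((QPC^{\rT}C + C^{\rT}CPQ)\,\delta P\big) + 2\Tr\big(PC^{\rT}CP\,\delta Q\big)$, with $\delta P$, $\delta Q$ now sourced by $\delta A$ through the variations of (\ref{PQ}); reusing the same two adjoints --- $HQ+\Psi$ attached to the $\delta P$-term and $\Phi$ attached to the $\delta Q$-term --- yields the contributions $4(HQ+\Psi)P$ and $4Q\Phi$, whose sum is $4(H^2 + Q\Phi + \Psi P)$ because $HQP = H^2$. Assembling the three blocks gives (\ref{dNdGamma}). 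I expect the only genuine obstacle to be the algebraic identification $\Xi = HQ+\Psi$ of the adjoint attached to the $\delta P$-terms: reaching it requires substituting the Lyapunov right-hand sides $C^{\rT}C$ and $BB^{\rT}$ in the correct order and then recognizing the observability-Schattenian equation; everything else is the same kind of bookkeeping of first variations of traces already carried out in Appendix~B.
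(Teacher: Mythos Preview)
Your proposal is correct, and the route you take differs from the paper's in an instructive way. The paper does not differentiate $N=2\Tr(CPQPC^{\rT})$ directly; instead it invokes Lemma~\ref{lem:H4state} to write $N=2\|F_1\|_2^2=2\|F_2\|_2^2$ for the subsidiary systems $F_1=(A,B,B^{\rT}Q)$ and $F_2=(A,PC^{\rT},C)$, applies Lemma~\ref{lem:diffH2} to each, and then accounts for the extra dependencies (the ``$C$-slot'' $B^{\rT}Q$ of $F_1$ depends on $B$ and on $Q(A,C)$, and dually for $F_2$). This makes $\d_B N$ and $\d_C N$ drop out quickly, but the $\d_A N$ block then requires a nontrivial identity $HA^{\rT}-A^{\rT}H=C^{\rT}CP-QBB^{\rT}$ and a somewhat delicate trace manipulation to reach $H^2+Q\Phi+\Psi P$. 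Your pure adjoint method trades that for the identification $\Xi=HQ+\Psi$ of the adjoint attached to the $\delta P$-terms, after which all three blocks follow uniformly by the same substitution; once $\Xi=HQ+\Psi$ is verified (it is: combine $A^{\rT}Q+QA=-C^{\rT}C$ and $AP+PA^{\rT}=-BB^{\rT}$ to get $A^{\rT}(HQ)+(HQ)A=QBB^{\rT}Q-C^{\rT}CPQ-QPC^{\rT}C$, then add the $\Psi$-equation), your $\d_A N$ computation is actually shorter than the paper's. In short, the paper reuses the $\cH_2$-derivative lemma at the cost of a bespoke algebraic identity for the $A$-block, whereas you front-load the work into identifying the two adjoints and then harvest all blocks mechanically.
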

%==============================================================================
\begin{proof}
We will compute the Frechet derivative of $N$ by using the representation
\begin{equation}
\label{FFF}
    N
    =
    2E_1
    =
    2E_2,
    \qquad
    E_1:= \|F_1\|_2^2,
    \qquad
    E_2:= \|F_2\|_2^2,
\end{equation}
of the $\cH_4$-norm from Lemma~\ref{lem:H4state} in terms of the squared $\cH_2$-norms  of the subsidiary systems
$
    F_1
    :=
    (A,B, B^{\rT} Q)$ and $
    F_2
    :=
    (A,PC^{\rT}, C)
$
as composite functions of the matrices $A$, $B$, $C$.
Since the controllability and observability Gramians of $F_1$ are $P$ and $\Psi$, and the controllability and observability Gramians of $F_2$ are $\Phi$ and $Q$, then application of Lemma~\ref{lem:diffH2} from Appendix~B to the systems $F_1$ and $F_2$ yields
\begin{eqnarray}
\label{diffH2F1}
    \d_{\Gamma_1}E_1
    =
    2
    \begin{bmatrix}
        \Psi P & \Psi B\\
        B^{\rT} H & 0
    \end{bmatrix},
    \qquad
    \Gamma_1
    :=
    \begin{bmatrix}
    A & B\\
    B^{\rT}Q & 0
    \end{bmatrix},\ \ \\
\label{diffH2F2}
    \hskip-5mm \d_{\Gamma_2}E_2
    =
    2
    \begin{bmatrix}
        Q \Phi & HC^{\rT}\\
        C\Phi & 0
    \end{bmatrix},
    \qquad
    \Gamma_2
    :=
    \begin{bmatrix}
    A & PC^{\rT}\\
    C & 0
    \end{bmatrix}.\ \ \,
\end{eqnarray}
Suppose the matrices $A$ and $C$ are fixed and hence, so also is $Q$. Then (\ref{diffH2F1}) implies that the first variation of  $E_1$ with respect to $B$ is
\begin{align}
\nonumber
    \delta E_1
    & =
    2
    \Tr(\Psi B \delta B^{\rT})
    +
    2\Tr(B^{\rT} H \delta (B^{\rT}Q)^{\rT})\\
\nonumber
    & =
    2
    \Tr((\Psi B +
    QH^{\rT} B)\delta B^{\rT})\\
\label{deltaH2F1}
    & =
    2
    \Tr(
        (\Psi + HQ)B
        \delta B^{\rT}
    ),
\end{align}
where the identity $Q H^{\rT}= QPQ = HQ$ has also been used. From (\ref{FFF}) and (\ref{deltaH2F1}), it follows that
\begin{equation}
\label{dH4FdB}
    \d_B N
    =
    4
        (HQ+\Psi)B.
\end{equation}
Suppose the matrices $A$ and $B$ are fixed and hence, so also is $P$. Then (\ref{diffH2F2}) implies that the first variation of  $E_2$ with respect to $C$ is
\begin{align}
\nonumber
    \delta E_2
    & =
    2
    \Tr(HC^{\rT}\delta(PC^{\rT})^{\rT})
    +
    2\Tr(C\Phi\delta C^{\rT})
    \\
\nonumber
    &=
    2
    \Tr((C\Phi +
    CH^{\rT} P)\delta C^{\rT})\\
\label{deltaH2F2}
    & =
    2
    \Tr(
        C(\Phi + PH)
        \delta C^{\rT}
    ),
\end{align}
where the identity $ H^{\rT} P= PQP = PH$ has also been used. From (\ref{FFF}) and (\ref{deltaH2F2}), it follows that
\begin{equation}
\label{dH4FdC}
    \d_C N
    =
    4
        C(PH+\Phi).
\end{equation}
Now, let $B$ and $C$ be constant. Then, in view of (\ref{diffH2F1}), the variation of $E_1$ with respect to $A$ is
\begin{equation}
\label{delta_E1}
    \delta E_1
    =
    2\Tr(\Psi P \delta A^{\rT})
    +
    2
    \Tr(B^{\rT} H \delta(B^{\rT}Q)^{\rT})
    =
    2\Tr(\Psi P \delta A^{\rT})
    +
    2
    \Tr(H^{\rT}BB^{\rT} \delta Q).
\end{equation}
The first variation of the Lyapunov equation for $Q$ in (\ref{PQ}) with $C$ constant yields
$    A^{\rT}\delta Q + (\delta Q) A + (\delta A)^{\rT}Q + Q\delta A = 0
$.
Therefore,
\begin{align}
\nonumber
    \Tr((\delta Q) B B^{\rT}H )
    & =
    -\Tr((\delta Q) (AP+PA^{\rT})H )\\
\nonumber
    &=
    -\Tr((\delta Q) APH)
    -\Tr((\delta Q) PA^{\rT}H )\\
\nonumber
    & =
    \Tr(( A^{\rT}\delta Q +(\delta A)^{\rT}Q + Q\delta A)PH)
    -\Tr((\delta Q) PA^{\rT}H )\\
\nonumber
    & =
    2\Tr(H^2 \delta A^{\rT})
    +
    \Tr(P(HA^{\rT}-A^{\rT}H)\delta Q)\\
\nonumber
    & =
    2\Tr(H^2 \delta A^{\rT})
    +
    \Tr(P(C^{\rT}C P- QBB^{\rT})\delta Q)\\
\nonumber
    & =
    2\Tr(H^2 \delta A^{\rT})
    -
    \Tr((A\Phi + \Phi A^{\rT})\delta Q)
    -
    \Tr(H^{\rT}BB^{\rT}\delta Q)\\
\nonumber
    & =
    2 \Tr((H^2+Q\Phi)\delta A^{\rT})
    -
    \Tr(H^{\rT}BB^{\rT}\delta Q)\\
\label{final}
    & =
    \Tr((H^2+Q\Phi)\delta A^{\rT}).
\end{align}
%==============================================================================
Here, we have also used the definition of the controllability Schattenian $\Phi$ in (\ref{Schattenian_Phi_Psi}), and the identity
$    HA^{\rT} - A^{\rT} H =C^{\rT}C P- QBB^{\rT}
$
which is obtained from
(\ref{PQ}) and (\ref{H}) as
$    0
    =
    Q
    (
        A P
        +
        PA^{\rT}
        +
        BB^{\rT}
    )
    -
    (
        A^{\rT}Q
        +
        QA
        +
        C^{\rT}C
    )
    P
    =
    H A^{\rT}
    -
    A^{\rT} H
    +
    QBB^{\rT}
    -
    C^{\rT}C P
$. Substitution of (\ref{final}) into (\ref{delta_E1}) yields
$\delta E_1 = 2\Tr((H^2 + Q\Phi + \Psi P)\delta A^{\rT}) $, which, in view of (\ref{FFF}), implies that
\begin{equation}
\label{dH4FdA}
    \d_A N
    =
    4(H^2 + Q\Phi + \Psi P).
\end{equation}
The representation (\ref{dNdGamma}) now follows from (\ref{dH4FdB}), (\ref{dH4FdC}) and  (\ref{dH4FdA}). \end{proof}

\end{document}